\newcommand{\CSP}{\ensuremath{\mathrm{CSP}\text{-}\mathrm{SAT}}}
\newcommand{\mCSP}{\ensuremath{\text{CSP}}}
\newcommand{\CC}{\ensuremath{\mathsf{CC}}}
\newcommand{\Search}{\ensuremath{\mathsf{Search}}}
\newcommand{\RCC}{\ensuremath{\mathsf{RCC}}}
\newcommand{\TT}{\ensuremath{\mathrm{TT}}}
\renewcommand{\vars}{\ensuremath{\mathsf{vars}}}
\begin{document}
\title{Random CNFs are Hard for Cutting Planes}
\author{Noah Fleming, Denis Pankratov, Toniann Pitassi, Robert Robere}
\maketitle

\begin{abstract}
The random $k$-SAT model is the most important and well-studied 
distribution over $k$-SAT instances.
It is closely connected to
statistical physics; it is used as a testbench for
satisfiablity algorithms, and lastly average-case hardness 
over this distribution has
also been linked to hardness of approximation via Feige's hypothesis.
In this paper, we prove that any Cutting Planes refutation
for random $k$-SAT requires exponential size, for $k$ that is
logarithmic in the number of variables,
and in the interesting regime where the number of clauses 
guarantees that the formula is unsatisfiable with
high probability.
\end{abstract}

\section{Introduction}

The Satisfiability (SAT) problem is perhaps the most famous problem in theoretical computer science, and significant effort has been devoted to understanding randomly generated SAT instances.
The most well-studied random SAT distribution is the random $d$-SAT model, ${\cal F}(m, n, d)$, where a random $d$-CNF over $n$ variables is chosen by uniformly and independently selecting $m$ clauses from the set of all possible clauses on $d$ distinct variables.
The random $d$-SAT model is widely studied for several reasons.
First, it is an intrinsically natural model analogous to the random graph model, and closely related to phase transitions and structural phenomena occurring in statistical physics.
Second, the random $d$-SAT model gives us a testbench of empirically hard examples which are useful for comparing and analyzing SAT algorithms; in fact, some of the better practical ideas in use today originated from insights gained by studying the performance of algorithms on this distribution and the properties of typical random instances.

Third, and most relevant to the current work, the difficulty of solving random $d$-SAT instances above the threshold (in the regime where the formula is
almost certainly unsatisfiable) has recently been connected to worst-case inapproximability \cite{feige-hyp}.
Feige's hypothesis states that there is no efficient algorithm to certify unsatisfiability of random 3-SAT instances for certain parameter regimes of $(m, n, d)$, and he shows that this hard-on-average assumption for 3-SAT implies worst-case inapproximability results for many NP-hard optimization problems.
The hypothesis was generalized to $d$-SAT as well as to any CSP, thus exposing more links to central questions in approximation algorithms and the power of natural SDP algorithms \cite{bks13}.
The importance of understanding the difficulty of solving random $d$-SAT instances in turn makes random $d$-SAT an important family of formulas for propositional proof complexity, since superpolynomial lower bounds for random $d$-SAT formulas in a particular proof system show that {\it any} complete and efficient algorithm based on the proof system will perform badly on random $d$-SAT instances.
Furthermore, since the proof complexity lower bounds hold in the unsatisfiable regime, they are directly connected to Feige's hypothesis.

Remarkably, determining whether or not a random SAT instance from the distribution ${\cal F}(m, n, d)$ is satisfiable is controlled quite precisely by the ratio $\Delta = m/n$, which is called the \emph{clause density}.
A simple counting argument shows that ${\cal F}(m,n,d)$ is unsatisfiable with high probability for $\Delta > 2^d \ln 2$.
The famous satisfiability threshold conjecture asserts that there is a constant $c_d$ such that random $d$-SAT formulas of clause density $\Delta$ are almost certainly satisfiable for $\Delta < c_d$ and almost certainly unsatisfiable if $\Delta > c_d$,
where $c_d$ is roughly $2^d \ln 2$.
In a major recent breakthrough, the conjecture was resolved for large values of $d$ \cite{sat-conj}.

From the perspective of proof complexity, the density parameter $\Delta$ also plays an important role in the \emph{difficulty} of refuting unsatisfiable CNF formulas.
For instance, in Resolution, which is arguably the simplest proof system, the complexity of refuting random $d$-SAT formulas is now very well understood in terms of $\Delta$.
In a seminal paper, Chvatal and Szemeredi \cite{cs-resolution} showed that for any fixed $\Delta$ above the threshold there is a constant $\kappa_{\Delta}$ such that random $d$-SAT requires size $\exp(\kappa_{\Delta} n)$ Resolution refutations with high probability.
In their proof, the drop-off in $\kappa_{\Delta}$ is doubly exponential in $\Delta$,  making the lower bound trivial when the number of clauses is larger than $n \log^{1/4} n$ (and thus does not hold when $d$ is large.)
Improved lower bounds \cite{BKPS98,bw-simple} proved that the drop-off in $\kappa_{\Delta}$ is at most polynomial in $\Delta$.
More precisely, they prove that a random $d$-SAT formula with at most $n^{(d+2)/4}$ clauses requires exponential size Resolution refutations.
Thus for all values of $d$, even when the number of clauses is way above the threshold, Resolution refutations are exponentially long.
They also give asymptotically matching upper bounds, showing that there are DLL refutations of size $\exp(n/\Delta^{1/(d-2)})$.

Superpolynomial lower bounds for random $d$-SAT formulas are also known for other weak proof systems such as the polynomial calculus and $\mathsf{Res}(k)$ \cite{BI-random,alekhnovich05}, and random $d$-SAT is also conjectured to be hard for stronger semi-algebraic proof systems.
In particular, it is a relatively long-standing open problem to prove superpolynomial size lower bounds for Cutting Planes refutations of random $d$-SAT.
As alluded to earlier, this potential hardness (and even more so for the semi-algebraic SOS proof system) has been linked to hardness of approximation.

In this paper, we focus on the \emph{Chvatal-Gomory Cutting Planes} proof system and some of its generalizations.
A proof in this system begins with a set of unsatisfiable linear integral inequalities, and new integral inequalities are derived by (i) taking nonnegative linear combinations of previous lines, or (ii) dividing a previous inequality through by 2 (as long as all coefficients on the left-hand side are even) and then rounding up the constant term on the right-hand side.
The goal is to derive the ``false'' inequality $0 \geq 1$ with as few derivation steps as possible.
This system can be generalized in several natural ways.
In \emph{Semantic Cutting Planes}, there are no explicit rules -- a new linear inequality can be derived from two previous ones as long as it follows soundly.
A further generalization of both CP and Semantic CP is the $\CC$-proof system, where now every line is only required to have low (deterministic or real) communication complexity; like Semantic CP, a new line can be derived from two previous ones as long as the derivation is sound.

The main result of this paper is a new proof method for obtaining Cutting Planes lower bounds, and we apply it to prove the first nontrivial lower bounds for the size of Cutting Planes refutations of random $d$-SAT instances.
Specifically we prove that for $d = \Theta(\log n)$ and $m$ in the unsatisfiable regime, with high probability random $d$-SAT requires exponential-size Cutting Planes refutations.
Our main result holds for the other generalizations mentioned above (Semantic CP and $\CC$-proofs).

We obtain the lower bound by establishing an \emph{equivalence} between proving such lower bounds and proving a corresponding monotone circuit lower bound.
Said a different way, we generalize the interpolation method so that it applies to {\it any} unsatisfiable family of formulas.
Namely, we show that proving superpolynomial size lower bounds for any formula for Cutting Planes amounts to proving a monotone circuit lower bound for certain yes/no instances of the monotone CSP problem.
Applying this equivalence to random $d$-SAT instances, we reduce the problem to that of proving a monotone circuit lower bound for a specific family of yes/no instances of the monotone CSP problem.
We then apply the symmetric method of approximations in order to prove exponential monotone circuit lower bounds for our monotone CSP problem.

In recent private communication with Pavel Hrubes and Pavel Pudl{\'{a}}k  we have learned that they have independently proven a similar theorem.

\subsection{Related Work}

Exponential lower bounds on lengths of refutations are known for CP, Semantic CP, and low-weight $\CC$-proofs) \cite{pudlak-cp,FHL16,bpr-cp}
These lower bounds were obtained using the method of interpolation \cite{krajicek-interp}.
A lower bound proof via interpolation begins with a special type of formula -- \emph{an interpolant}.
Given two disjoint $\NP$ sets $U$ and $V$ an interpolant formula has the form $A(x,y) \land B(x,z)$ where the $A$-part asserts that $x \in U$, as verified by the $\NP$-witness $y$, and the $B$-part asserts that $x \in V$, as verified by the $\NP$-witness $z$.
The prominent example in the literature is the clique/coclique formula where $U$ is the set of all graphs with the clique number at least $k$, and $V$ is the set of all $(k-1)$-colorable graphs.
Feasible interpolation for a proof system amounts to showing that if an interpolant formula has a short proof then we can extract from the proof a small monotone circuit for separating $U$ from $V$.
Thus lower bounds follow from the celebrated monotone circuit lower bounds for clique \cite{razborov-clique,alon-boppana}.

Despite the success of interpolation, it has been quite limited since it only applies to ``split'' formulas.
In particular, the only family of formulas for which are known to be hard for (unrestricted) Cutting Planes are the clique-coclique formulas.
In contrast, for Resolution we have a clean combinatorial characterization for when a formula does or doesn't admit a short Resolution refutation \cite{bw-simple,AD-width}; we would similarly like to understand the strength of Cutting Planes with respect to arbitrary formulas and most notably for random $d$-SAT formulas and Tseitin formulas.

Our main equivalence is an adaptation of the earlier work combined with a key reduction between search problems and monotone functions established in \cite{GP14}.
With this reduction in hand, our main proof is very similar to both \cite{bpr-cp} and \cite{razborov-split}.
\cite{bpr-cp} proved this equivalence for the special case of the
clique-coclique formulas. Namely they showed that low-weight $\CC$-proofs
for this particular formula are equivalent to monotone circuits
for the corresponding sets $U,V$.
Our argument is essentially the same as theirs, only we realize
that it holds much more generally for {\it any} unsatisfiable CNF
and partition of the variables, and the corresponding set of
Yes/No instances of $\mCSP$.

On the other hand, Razborov \cite{razborov-split}
proved the equivalence between PLS communication games (for KW games) and monotone circuits.
The construction in our proof is essentially equivalent to his but
bypasses PLS and proves a direct equivalence between monotone circuits and $\CC$-proofs.
We could have alternatively proven
our equivalence via: (1) Razborov's equivalence between monotone circuits (for a monotone function) and PLS communication games
(for the associated KW game), and then (2) an equivalence between PLS communication games (for a monotone KW game)
and $\CC$-proofs (for the search problem associated with the KW game).
Inspired by \cite{sokolov}, we give a direct argument which is (somewhat) simpler.


\section{Definitions and Preliminaries}


If $x, y \in \set{0,1}^n$ then we write $x \leq y$ if $x_i \leq y_i$ for all $i$.
A function $f: \set{0,1}^n \rightarrow \set{0,1}$ is \emph{monotone} if $f(x) \leq f(y)$ whenever $x \leq y$.
If $f$ is monotone then an input $x \in \set{0,1}^n$ is a \emph{maxterm} of $f$ if $f(x) = 0$ but $f(x') = 1$ for any $x'$ obtained from $x$ by flipping a single bit from $0$ to $1$; dually, $x$ is a \emph{minterm} if $f(x) = 1$ but $f(x') = 0$ for any $x'$ obtained by flipping a single bit of $x$ from $1$ to $0$.
More generally, if $f(x) = 1$ we call $x$ an \emph{accepting instance} or a \emph{yes instance}, while if $f(x) = 0$ then we call $x$ a \emph{rejecting instance} or a \emph{no instance}.
If $x$ is any yes instance of $f$ and $y$ is any no instance of $f$ then there exists an index $i \in [n]$ such that $x_i = 1, y_i = 0$, as otherwise we would have $x \leq y$, contradicting the fact that $f$ is monotone.
If $f, g, h: \set{0,1}^n \rightarrow \set{0,1}$ are boolean functions on the same domain then $f, g \vDash h$ if for all $x \in \set{0,1}^n$ we have $f(x) \wedge g(x) \implies h(x)$.

A \emph{monotone circuit} is a circuit in which the only gates are $\wedge$ or $\vee$ gates.
A \emph{real monotone circuit} is a circuit in which each internal gate has two inputs and computes any function $\phi(x, y) : \reals^2 \rightarrow \reals$ which is monotone nondecreasing in its arguments.

\begin{defn}\label{def:ineq}
A \emph{linear integral inequality} in variables ${x} = (x_1, \ldots, x_n)$ with coefficients ${a} = (a_1, \ldots, a_n) \in \mathbb{Z}^n$ and constant term $b \in \mathbb{Z}$ is an expression 
\[ {a}^T {x} \ge b.\]
\end{defn}

\begin{defn}\label{def:cp_proof}
Given a system of linear integral inequalities $A {x} \ge {b},$ where $A \in \mathbb{Z}^{m \times n}$ and ${b} \in \mathbb{Z}^m$, a \emph{cutting planes proof} of an inequality $a^T {x} \ge c$ is a sequence of inequalities \[ {a_1}^T {x} \ge c_1, {a_2}^T {x} \ge c_2, \ldots, {a_\ell}^T {x} \ge c_\ell,\] such that ${a_\ell} = {a}$, $c_\ell = c$ and every inequality $i \in [\ell]$ satisfies either
\begin{itemize}
\item ${a_i}^T {x} \ge c_i$ appears in $A {x} \ge {b}$,
\item ${a_i}^T {x} \ge c_i$ is a Boolean axiom, i.e., $x_j \ge 0$ or $-x_j \ge -1$ for some $j$,
\item there exists $j, k < i$ such that ${a_i}^T {x} \ge c_i$ is the sum of the linear inequalities ${a_{j}}^T {x} \ge c_{j}$ and ${a_{k}}^T {x} \ge c_k$,
\item there exists $j < i$ and a positive integer $d$ dividing every coefficient in $a_j$ such that $a_i = a_j/d$ and $c_i = \lceil c_j/d \rceil$.
\end{itemize}
The \emph{length} of the proof is $\ell$, the number of lines.
If all coefficients and constant terms appearing in the cutting planes proof are bounded by $O(\poly(n))$, then the proof is said to be of \emph{low weight}.
\end{defn}

Let ${\cal F} = C_1 \land \ldots \wedge C_m$ be an unsatisfiable CNF formula over variables $z_1, \ldots, z_n$.
For any clause $C$ let $C^-$ denote the set of variables appearing negated in the clause and let $C^+$ denote variables occurring positively in the clause.
Each clause $C$ in ${\cal F}$ can be encoded as a linear integral inequality as \[ \sum_{z_i \in C^+} z_i + \sum_{z_i \in C^-} (1-z_i) \ge 1.\]
Thus each unsatisfiable CNF can be translated into a system of linear integral inqualities $Az \geq b$ with no $0/1$ solutions.
A \emph{cutting planes (CP) refutation} of this system is a cutting planes proof of the inequality $0 \ge 1$ from $A x \ge b$.

\begin{defn}\label{def:semantic-proof}
  Let ${\cal F} = C_1 \land \ldots \wedge C_m$ be an unsatisfiable $k$-CNF on $n$ variables.
  A \emph{semantic refutation} of ${\cal F}$ is a sequence \[ L_1, L_2, \ldots, L_\ell\] of boolean functions $L_i: \set{0,1}^n \rightarrow \set{0,1}$ such that
  \begin{enumerate}
  \item $L_i = C_i$ for all $i = 1, 2, \ldots, m$.
  \item $L_\ell = 0$, the constant $0$ function.
  \item For all $i > m$ there exists $j, k < i$ such that $L_j, L_k \vDash L_i$.
  \end{enumerate}
  The \emph{length} of the refutation is $\ell$.
\end{defn}

We will be particularly interested in semantic refutations where the boolean functions can be computed by short communication protocols.
\begin{defn}\label{def:semantic-cc}
  Let ${\cal F} = C_1 \land \ldots \wedge C_m$ be an unsatisfiable CNF on $n = n_1 + n_2$ variables, and let $X = \set{x_1, x_2, \ldots, x_{n_1}}$, $Y= \set{y_1, \ldots, y_{n_2}}$ be a partition of the variables.
  A $\CC_k$-refutation of ${\cal F}$ with respect to the partition $(X, Y)$ is a semantic refutation \[ L_1, \ldots, L_\ell \] of ${\cal F}$ such that each function $L_i$ in the proof can be computed by a $k$-bit communication protocol with respect to the partition $(X, Y)$.
\end{defn}


Since any linear integral inequality $ax + by \geq c$ with polynomially bounded weights can be evaluated by a trivial $O(\log n)$-bit communication protocol (just by having Alice evaluating $ax$ and sending the result to Bob), it follows that low-weight cutting planes proofs are also $\CC_{O(\log n)}$-proofs.
We can similarly define a proof system which can simulate any cutting planes proof by strengthening the type of communication protocol.

\begin{defn}\label{def:real-communication}
  A $k$-round \emph{real communication protocol} is communication protocol between two players, Alice and Bob, where Alice receives an input $x \in {\cal X}$ and Bob receives $y \in {\cal Y}$.
  In each round, Alice and Bob each send real numbers $\alpha, \beta$ to a ``referee'', who responds with a single bit $b$ which is $1$ if $\alpha \geq \beta$ and $0$ otherwise.
  After $k$ rounds of communication, the players output a bit $b$.
  The protocol computes a function $F: {\cal X} \times {\cal Y} \rightarrow \set{0,1}$ if for all $(x, y) \in {\cal X} \times {\cal Y}$ the protocol outputs $F(x, y)$.
\end{defn}

\begin{defn}\label{def:real-cc}
  Let ${\cal F} = C_1 \land \ldots \wedge C_m$ be an unsatisfiable CNF on $n = n_1 + n_2$ variables $X = \set{x_1,\ldots,x_{n_1}}$ and $Y = \set{y_1,\ldots y_{n_2}}$.
  An $\RCC_k$-refutation of ${\cal F}$ is a semantic refutation \[ L_1, L_2, \ldots, L_\ell \] in which each function $L_i$ can be computed by a $k$-round real communication protocol with respect to the variable partition $X, Y$.
\end{defn}

It is clear that \emph{any} linear integral inequality $ax + by \geq c$ can be evaluated by a $1$-round real communication protocol, and so it follows that a cutting planes refutation of ${\cal F}$ is also an $\RCC_1$-refutation of ${\cal F}$.
We record each of these observations in the next proposition.

\begin{prop}\label{prop:cp-cc}
  Let ${\cal F}$ be an unsatisfiable CNF on variables $z_1, z_2, \ldots, z_n$, and let $X, Y$ be any partition of the variables into two sets.
  Any length-$\ell$ low-weight cutting planes refutation of ${\cal F}$ is a length-$\ell$ $\CC_{O(\log n)}$-refutation of ${\cal F}$.
  Similarly, any length-$\ell$ cutting planes refutation of ${\cal F}$ is a length-$\ell$ $\RCC_1$-refutation of ${\cal F}$.
\end{prop}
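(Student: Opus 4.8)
The plan is to observe that the syntactic rules of cutting planes are in fact \emph{semantically} sound over the Boolean cube, so that a low-weight (resp.\ arbitrary) cutting planes refutation, read line by line, already satisfies Definition~\ref{def:semantic-proof}; then it remains only to bound the communication complexity of each individual line, which is exactly the routine described just before the proposition.

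First I would reinterpret the $i$-th line $a_i^T x \ge c_i$ of the refutation as the Boolean function $L_i : \set{0,1}^n \to \set{0,1}$ that is the indicator of $\{z \in \set{0,1}^n : a_i^T z \ge c_i\}$. Under the standard clause-to-inequality encoding recalled above, a clause axiom is sent to the function $C_i$ itself, each Boolean axiom $x_j \ge 0$ or $-x_j \ge -1$ is sent to the constant-$1$ function, and the final line $0 \ge 1$ is sent to the constant-$0$ function, matching requirements (1) and (2) of Definition~\ref{def:semantic-proof} (possibly after reindexing the lines so that the clause axioms come first). For requirement (3) I would check soundness of the two derivation rules \emph{over $\set{0,1}^n$}: if $z$ satisfies both $a_j^T z \ge c_j$ and $a_k^T z \ge c_k$ then it satisfies their sum, so addition is sound; and if $z$ satisfies $a_j^T z \ge c_j$ and a positive integer $d$ divides every entry of $a_j$, then $(a_j/d)^T z$ is an \emph{integer} (since $a_j/d \in \mathbb{Z}^n$ and $z \in \set{0,1}^n$) and is at least $c_j/d$, hence at least $\lceil c_j / d\rceil$; so the division-and-round rule is sound as well. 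Thus $L_j, L_k \vDash L_i$ at every derivation step, and the transcribed sequence is a semantic refutation of ${\cal F}$ of the same length.

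Second, I would bound the communication cost of a single line $a^T z \ge c$, where $z$ is split as $(x,y)$ according to the partition $(X, Y)$. In the low-weight case, all coefficients and the constant are $O(\poly(n))$, so Alice computes the integer $\sum_{z_i \in X} a_i x_i$, which has $O(\log n)$ bits, and sends it to Bob, who outputs whether this value plus $\sum_{z_i \in Y} a_i y_i$ is at least $c$; this is an $O(\log n)$-bit protocol, so we obtain a $\CC_{O(\log n)}$-refutation. In the general case I would drop the weight restriction and use one round of real communication: Alice sends $\alpha = \sum_{z_i \in X} a_i x_i$, Bob sends $\beta = c - \sum_{z_i \in Y} a_i y_i$, and the referee's comparison bit, which is $1$ iff $\alpha \ge \beta$, is precisely the value of the line, so each line is an $\RCC_1$ function. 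Applying this to every line of the transcribed semantic refutation yields the two claims.

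There is essentially no obstacle here: the only mildly delicate point is the integrality argument that makes the division-and-round rule sound over the cube (a cutting planes derivation need only be sound over integer points, and $\set{0,1}^n \subseteq \mathbb{Z}^n$), and everything else is the elementary protocol sketched above.
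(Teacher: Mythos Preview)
Your proposal is correct and matches the paper's approach: the paper does not give a separate proof but simply records, in the text immediately preceding the proposition, exactly the two observations you spell out (Alice sends her partial sum in $O(\log n)$ bits for the low-weight case, and one referee comparison suffices in the real case). Your write-up is in fact more careful than the paper's, since you explicitly verify semantic soundness of the addition and division-with-rounding rules over $\{0,1\}^n$ and note how Boolean axioms and reindexing fit Definition~\ref{def:semantic-proof}.
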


\subsection{Total Search Problems and Monotone CSP-SAT}\label{sec:csp-sat}

In this section we review the equivalence between the search problem associated with an unsatisfiable CNF formula, and the Karchmer-Wigderson (KW) search problem for a related (partial) monotone function.

\begin{defn}\label{def:total-search}
  Let $n_1, n_2, m$ be positive integers, and let ${\cal X}, {\cal Y}$ be finite sets.
  A \emph{total search problem} is a relation ${\cal R} \subseteq {\cal X}^{n_1} \times {\cal Y}^{n_2} \times [m]$ where for each $(x,y) \in {\cal X}^{n_1} \times {\cal Y}^{n_2}$, there is an $i \in [m]$ such that ${\cal R}(x,y,i)=1$.
  We refer to $x \in {\cal X}^{n_1}$ as Alice's input and $y \in {\cal Y}^{n_2}$ as Bob's input.
  The search problem is \emph{$d$-local} if for each $i \in [m]$ we have that ${\cal R}(*,*,i)$ depends on a fixed set of at most $d$ coordinates of $x$ (it may depend on any number of $y$ coordinates).
\end{defn}

A standard example of a $d$-local search problem is the search problem associated with unsatisfiable $d$-CNFs.
\begin{defn}\label{def:canonical-search}
  Let ${\cal F}$ be an unsatisfiable $d$-CNF formula with $m$ clauses and $n$ variables $z_1,\ldots,z_{n}$.
  Consider any partition of $z_1, z_2, \ldots, z_n$ into two sets $x_1, x_2, \ldots, x_{n_1}$ and $y_1, y_2, \ldots, y_{n_2}$.
  The search problem $\Search({\cal F})$ with respect to this partition takes as input an assignment $x \in \set{0,1}^{n_1}$ and $y \in \set{0,1}^{n_2}$ and outputs the index $i \in [m]$ of a violated clause under this assignment.
\end{defn}

This problem is clearly $d$-local since each clause can contain at most $d$ variables from $x_1, x_2, \ldots, x_{n_1}$.
Associated with this search problem is the following monotone variant of the constraint satisfaction problem.

\begin{defn}
  Let $H  = (L \cup R, E)$ be a bipartite graph such that each vertex $v \in L$ has degree at most $d$, and let $m = |L|$ and $n = |R|$.
  Let $\Sigma$ be a finite alphabet.
  A \emph{constraint satisfaction problem} (CSP) ${\cal H}$ with topology $H$ and alphabet $\Sigma$ is defined as follows.
  The vertices in $L$ are thought of as the set of \emph{constraints}, and the vertices in $R$ are thought of as a set of \emph{variables}; thus for each vertex $i \in L$ we let $\vars(i)$ denote the neighbourhood of $i$.
  For each vertex $i \in L$ the CSP has an associated boolean function $\TT_u: \Sigma^{\vars(i)} \rightarrow \set{0,1}$ called the \emph{truth table} of $i$ that encodes the set of ``satisfying'' assignments to the constraint associated with $i$.
  An assignment $\alpha \in \Sigma^n$, thought of as a $\Sigma$-valued assignment to the variables $R$, \emph{satisfies} the CSP ${\cal H}$ if for each $i \in L$ we have $\TT_i(\alpha \restriction \vars(i)) = 1$, otherwise the assignment \emph{falsifies} the CSP.

  For each $i \in [m]$ and $\alpha \in \Sigma^{\vars(i)}$ we abuse notation and let $\TT_i(\alpha)$ represent the boolean variable corresponding to this entry of the truth table for the constraint $i$.
\end{defn}

\begin{defn}\label{def:csp-sat}
  Let $H  = (L \cup R, E)$ be a bipartite graph such that each vertex $i \in L$ has degree at most $d$, and let $m = |L|$ and $n = |R|$.
  We think of $H$ as encoding the topology of a constraint satisfaction problem, where each vertex $i \in L$ represents a \emph{constraint} of the CSP and each $i \in R$ represents a \emph{variable} of the CSP.
  Let $\Sigma$ be a finite alphabet, and let $N = \sum_{i = 1}^m |\Sigma|^{\vars(i)} \leq m|\Sigma|^d$.
  The monotone function $\CSP_{H, \Sigma} : \set{0,1}^N \rightarrow \set{0,1}$ is defined as follows.
  An input $x \in \set{0,1}^N$ encodes a CSP ${\cal H}(x)$ by specifying for each vertex $u \in L$ its truth table \[\TT_u^x : \Sigma^{\vars(u)} \rightarrow \set{0,1}.\]
  Given an assignment $x \in \set{0,1}^N$ the function $\CSP_{H, \Sigma}(x) = 1$ if and only if the CSP ${\cal H}(x)$ is satisfiable.
  This function is clearly monotone since for any $x, y \in \set{0,1}^N$ with $x \leq y$, any satisfying assignment for the CSP ${\cal H}(x)$ is also a satisfying assignment for the CSP ${\cal H}(y)$.
\end{defn}

Next we show how to relate $d$-local total search problems and the $\CSP$ problem.
Let ${\cal R} \subseteq {\cal X}^{n_1} \times {\cal Y}^{n_2} \times [m]$ be a $d$-local total search problem.
Associated with ${\cal R}$ is a bipartite \emph{constraint graph} $H_{\cal R}$ encoding for each $i \in [m]$ the coordinates in ${\cal X}^{n_1}$ on which ${\cal R}(*, *, i)$ depends.
Formally, the constraint graph is the bipartite graph $H_{\cal R} = (L \cup R, E)$ with $L = [m]$, $|R| = [n_1]$, and for each pair $(i, j) \in L \times R$ we add the edge if ${\cal R}(*, *, i)$ depends on the variable $x_j$.
Note that each vertex $u \in L$ has degree at most $d$, since the original search problem is $d$-local.

Given ${\cal R}$ and its corresponding constraint graph we can give a natural way to construct accepting and rejecting instances of $\CSP_{H_{\cal R}, {\cal X}}$ from ${\cal X}^{n_1}$ and ${\cal Y}^{n_2}$.
To reduce clutter, given a $d$-local total search problem ${\cal R}$ we abuse notation and write \[\CSP_{{\cal R}} := \CSP_{H_{\cal R}, {\cal X}}.\]

\begin{description}
\item[Accepting Instances ${\cal U}$.] For any $x \in {\cal X}^{n_1}$ we construct an accepting input ${\cal U}(x)$ of $\CSP_{{\cal R}}$ as follows.
  For each vertex $i \in L$ we define the corresponding truth table $\TT_i$ by setting $\TT_i(\alpha) = 1$ if $x \restriction \vars(i) = \alpha$ and $\TT_i(\alpha) = 0$ otherwise.
\item[Rejecting Instances ${\cal V}$.] For any $y \in {\cal Y}^{n_2}$ we construct a rejecting input ${\cal V}(y)$ of $\CSP_{{\cal R}}$ as follows.
  For each vertex $i \in L$ and each $\alpha \in \Sigma^{\vars(i)}$ we set \[\TT_i(\alpha) = 0 \Longleftrightarrow {\cal R}(\alpha, y, i) \text{ holds}.\]
\end{description}

Given $x \in {\cal X}^{n_1}$ it is easy to see that ${\cal U}(x)$ is a satisfying assignment for $\CSP_{\cal R}$ since $x$ is a satisfying assignment for the corresponding CSP.
The rejecting instances require a bit more thought.
Let $y \in {\cal Y}^{n_2}$ and consider the rejecting instance ${\cal V}(y)$ as defined above.
Suppose by way of contradiction that the corresponding CSP ${\cal H}_{\cal R}({\cal V}(y))$ is satisfiable, and let $x \in {\cal X}^{n_1}$ be the satisfying assignment for the CSP.
It follows by definition of the rejecting instances that ${\cal R}(x, y, u)$ does not hold for any $u$, implying that ${\cal R}$ is not total.

\section{Relating Proofs and Circuits}
In this section we relate $\CC_d$-proofs and monotone circuits, as well as $\RCC_1$-proofs and \emph{real} monotone circuits.

\begin{thm}\label{thm:main_theorem}
  Let ${\cal F}$ be an unsatisfiable CNF formula on $n$ variables and let $X = \set{x_1, \ldots, x_{n_1}}$, $Y = \set{y_1, \ldots, y_{n_2}}$ be any partition of the variables.
  Let $k$ be a positive integer.
  If there is a $\CC_k$ refutation of ${\cal F}$ with respect to the partition $(X, Y)$ of length $\ell$, then there is a monotone circuit separating the accepting and rejecting instances ${\cal U}(\set{0,1}^{n_1}), {\cal V}(\set{0,1}^{n_2})$ of $\CSP_{\Search({\cal F})}$ of size $O(2^k \ell)$.
\end{thm}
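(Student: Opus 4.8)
The plan is to adapt the classical interpolation argument to this general $\CSP$-SAT setting. Given a $\CC_k$ refutation $L_1, \dots, L_\ell$ of $\mathcal F$ with respect to $(X,Y)$, I want to build a monotone circuit $\mathcal C$ on the $N$ input bits of $\CSP_{\Search(\mathcal F)}$ such that $\mathcal C$ outputs $1$ on every accepting instance $\mathcal U(x)$, $x \in \{0,1\}^{n_1}$, and $0$ on every rejecting instance $\mathcal V(y)$, $y \in \{0,1\}^{n_2}$. The key point is that for \emph{any} accepting instance $\mathcal U(x)$ and \emph{any} rejecting instance $\mathcal V(y)$, the combined pair $(x,y)$ is a Boolean assignment to the variables of $\mathcal F$, so it falsifies some clause $C_i$; moreover the whole refutation ``knows'' this in the sense that following the $L_j$'s lets Alice and Bob (with inputs $x$ and $y$) locate such an $i$. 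So I will first extract, from the $\CC_k$ protocols for the lines, a communication-efficient way of computing a violated clause index, and then convert that into a monotone circuit via the reduction to monotone functions of \cite{GP14} that the introduction advertises.

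More concretely, here are the steps. (1) Observe that on input $(x,y)$, the first line $L_1 = C_1$: if $L_1(x,y) = 0$ we have found a violated clause $C_1$, so assume $L_i(x,y) = 1$ for all initial clauses; since $L_\ell \equiv 0$, walking from the axioms toward $L_\ell$ there must be a derivation step $L_j, L_k \vDash L_i$ with $L_j(x,y) = L_k(x,y) = 1$ but $L_i(x,y) = 0$, which is impossible --- so in fact one of the initial $L_i$ must already evaluate to $0$ on $(x,y)$; the refutation structure is what we exploit to \emph{find} it with only $O(k)$ communication per node and $O(\ell)$ nodes. This is exactly the standard ``feasible interpolation for semantic proofs'' argument: a length-$\ell$ semantic $\CC_k$-refutation yields a communication protocol (in the real/deterministic model appropriate to $\CC_k$) of cost $O(k \log \ell)$, or alternatively a \emph{protocol DAG} of size $O(2^k \ell)$, solving $\Search(\mathcal F)$ with respect to $(X,Y)$. (2) Next, feed this protocol-DAG for $\Search(\mathcal F)$ into the correspondence between communication protocols for a $d$-local search problem $\mathcal R$ and monotone circuits for the associated $\CSP_{\mathcal R}$ separating $\mathcal U(\mathcal X^{n_1})$ from $\mathcal V(\mathcal Y^{n_2})$. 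Here $\mathcal R = \Search(\mathcal F)$, $\mathcal X = \mathcal Y = \{0,1\}$, the alphabet is Boolean, and the construction of $\mathcal U, \mathcal V$ is precisely the one given in Section~\ref{sec:csp-sat}. A protocol-DAG node owned by Alice becomes an $\vee$-gate (or $\wedge$-gate), a node owned by Bob becomes the dual, and the leaves --- each labelled by a clause index $i$ and a ``reason'' $\alpha \in \Sigma^{\vars(i)}$ --- become the input literal $\TT_i(\alpha)$; monotonicity is automatic because the $\CSP$ inputs only appear positively. Tracking the size gives $O(2^k \ell)$ gates since each of the $\ell$ lines contributes a rectangle-partition of size $2^{O(k)}$.

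(3) Finally, verify correctness of the resulting monotone circuit. For an accepting instance $\mathcal U(x)$: the monotone circuit, evaluated on $\mathcal U(x)$, simulates Alice's side of the search protocol with her true input being $x$; it must output $1$ because $x$ together with \emph{any} $y$ never reaches a leaf $(i,\alpha)$ with $\TT_i^{\mathcal U(x)}(\alpha) = 0$, since $\TT_i^{\mathcal U(x)}(\alpha) = 1$ exactly when $\alpha = x \restriction \vars(i)$, and a correct search protocol can only output $i$ as a violated-clause index when $x \restriction \vars(i)$ falsifies $C_i$, whereas $C_i$'s truth table in $\mathcal U(x)$ was \emph{defined} to make $x$ satisfy it. Dually, for a rejecting instance $\mathcal V(y)$ the circuit simulates Bob's side with input $y$, reaches a leaf where $\TT_i(\alpha) = 0 \iff \Search(\mathcal F)(\alpha, y, i)$ holds, i.e. where $\alpha$ combined with $y$ violates $C_i$, and hence outputs $0$. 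I would do this carefully by induction over the DAG, showing that the value computed at each gate equals the output of the subprotocol rooted there under the relevant input.

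The main obstacle I expect is step (2): getting the book-keeping of the protocol-to-circuit translation exactly right so that Alice-nodes/Bob-nodes map to the correct Boolean connective, the leaves carry enough information ($i$ \emph{and} the witnessing assignment-restriction $\alpha$, not just $i$) to index a single $\CSP$ variable $\TT_i(\alpha)$, and so that the $2^k$ blow-up per line (rather than a worse $2^{O(k\ell)}$) is genuinely achieved --- this is where the ``low-weight / bounded-round'' structure of $\CC_k$ really matters, and where the adaptation of the \cite{bpr-cp} and \cite{razborov-split} arguments has to be done with some care. Everything else is a routine unwinding of definitions.
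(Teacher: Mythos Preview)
Your proposal is correct and uses the same core ideas as the paper: convert the refutation into a monotone circuit by replacing Alice-nodes with $\vee$-gates, Bob-nodes with $\wedge$-gates, and axiom leaves with the input variables $\TT_i(\alpha)$, and verify correctness by induction. The only difference is organizational. You factor the argument into two steps --- first extract from the $\CC_k$-refutation a size-$O(2^k\ell)$ communication DAG for $\Search(\mathcal F)$, then turn that DAG into a monotone circuit for $\CSP_{\Search(\mathcal F)}$ via the Karchmer--Wigderson/Razborov translation --- whereas the paper builds the circuit \emph{directly} from the refutation: for each line $L$ and each $0$-monochromatic rectangle (``good history'') $h$ of its $k$-bit protocol, it inductively constructs a subcircuit $\mathcal C^L_h$, using at a derivation step $L_1,L_2\vDash L$ a ``stacked'' height-$2k$ protocol tree (run the protocol for $L_1$, then for $L_2$) whose leaves point to the already-built subcircuits for $L_1$ and $L_2$. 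The paper itself notes in the introduction that your two-step route (via a communication game for the search problem, in the spirit of Razborov) is a valid alternative it chose to bypass in favor of the direct construction.

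One small correction to your step (3): the inductive invariant is an inequality, not an equality. At an $\vee$-gate on an accepting input $\mathcal U(x)$ you only know that \emph{some} child evaluates to $1$ (the one Alice would actually pick), so the gate value is $\geq$ the protocol's value; dually at $\wedge$-gates on rejecting inputs. This is exactly what you need, but your phrasing ``equals the output of the subprotocol'' is too strong.
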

\begin{proof}
  Let ${\cal F} = C_1 \wedge \ldots \wedge C_m$ over variables $x_1,\ldots,x_{n_1},y_1,\ldots,y_{n_2}$.
  Let $P$ be a $\CC_k$-proof for ${\cal F}$ with $\ell$ lines.
  Order the lines in $P$ as $L_1,L_2, \ldots, L_{\ell}$, where the final line $L_{\ell}$ is the identically false formula, and each earlier line is either a clause, or follows semantically from two earlier lines. 

  We build the circuit for $\CSP_{\Search({\cal F})}$ that separates ${\cal U}, {\cal V}$ by induction on $\ell$.
  For each line $L$ in the proof, there are $2^k$ possible histories $h$, each with an associated monochromatic rectangle $R_L(h)$.
  A rectangle $h$ is {\it good} for $L$ if it is $0$-monochromatic.
  For every line $L$ and each good history $h$ for $L$, we will build a circuit ${\cal C}^L_h$ that correctly ``separates'' $x$ and $y$ for each $(x,y) \in R_L(h)$.
  By this, we mean that the circuit ${\cal C}^L_h$ outputs $1$ on ${\cal U}(x)$ (the 1-input associated with $x$) and outputs $0$ on ${\cal V}(y)$ (the $0$-input associated with $y$).

  For each leaf in the proof, the associated line $L$ is a clause $C_i$ of ${\cal F}$. 
  The communication protocol for $C_i$ is a two-bit protocol where Alice/Bob each send $0$ iff their inputs are $\alpha,\beta$ such that $C_i(\alpha,\beta)=0$.
  Thus there is only one good (0-monochromatic) rectangle with history $h=00$.
  This pair $\alpha,\beta$ corresponds to the variable $\TT_i(\alpha)$, and we define the circuit ${\cal C}^L_h$ corresponding to line $L=C_i$ and good history $h=00$ to be the variable $\TT_i(\alpha)$.

  Now suppose that $L$ is derived from $L_1$ and $L_2$, and inductively we have circuits ${\cal C}^{L_1}_{h'}$, ${\cal C}^{L_2}_{h''}$ for each history $h'$ good for $L_1$ and $h''$ good for $L_2$.
  Given a good history $h$ for $L$, we will show how to build the circuit ${\cal C}^L_h$.
  It will use all of the the circuits that were built for $L_1$ and $L_2$ ($\{{\cal C}^{L_1}_{h'}, {\cal C}^{L_2}_{h''}\}$ for all good $h'$) and an additional $2^k$ gates.
  To build ${\cal C}^L_h$ we will construct a {\it stacked} protocol tree for $L$, corresponding to first
  running the communication protocol for $L_1$ and then running the communication protocol for $L_2$.
  This will give us a height $2k$ (full) binary tree, $T$, where the top part is the communication protocol tree for $L_1$, with protocol trees for $L_2$ hanging off of each of the leaves.
  We label each of the leaves of this stacked tree with a circuit from $\{{\cal C}^{L_1}_{h'}, {\cal C}^{L_2}_{h''}\}$ as follows.
  Consider a path labelled $h_1 h_2$ in $T$, where $h_1$ is the history from running $L_1$ and $h_2$ is the history from running $L_2$.
  By soundness, either the rectangle $R_L(h) \cap R_{L_1}(h_1)$ is $0$-monochromatic, or the rectangle $R_L(h) \cap R_{L_2}(h_2)$ is $0$-monochromatic.
  In the first case, we will label this leaf with ${\cal C}^{L_1}_{h_1}$ and otherwise we will label this leaf with ${\cal C}^{L_2}_{h_2}$.
  Now we will label the internal vertices of the stacked tree with a gate: if a node corresponds to Alice speaking, then we label the node with an $\lor$ gate, and otherwise if the node corresponds to Bob speaking, then we label the node with an $\land$ gate.
  The resulting circuit has size $2^k$ plus the sizes of the subcircuits, and thus the total circuit size is $2^k \ell$.
  The theorem is therefore immediately implied by the following claim.
  
  \spcnoindent
  {\bf Claim. }The circuit resulting from the above construction satisfies: for each line $L$ in $P$, and for each good history $h$ for $L$, ${\cal C}^L_h$ will be correct for all $(x,y) \in R_L(h)$.

  \spcnoindent
  \emph{Proof of Claim.}
    If $L$ is an axiom, then $L$ is a clause, $C_i$.
    The communication protocol for $C_i$ is a two-bit protocol where Alice and Bob each send $0$ iff their part of $C_i$ evalutes to 0.
    There is only one good (0-monochromatic) history, $h=00$.
    If $(x,y) \in R_L(h)$ then $C_i(x, y)=0$ by definition.
    Let $\alpha = x \restriction \vars(C_i)$.
    In our construction the circuit corresponding to ${\cal C}^L_{h}$ is labelled by the variable $\TT_i(\alpha)$, and it is easy to check that ${\tilde x}$ sets $\TT_i(\alpha)$ to true, and ${\tilde y}$ sets $\TT_i(\alpha)$ to false.

    If $L$ is not an axiom, then we will prove the
    lemma by proving the following stronger statement
    by induction: For each line $L$ (derived from previous lines $L_1$ and $L_2$), 
    and for each node $v$ in the stacked protocol tree for $L$,
    with corresponding (sub)history 
    $h' = h_1 h_2$, the subcircuit ${\cal C}^L_{h'}$ associated with vertex $v$
    is correct on all $(x,y) \in R_L(h) \cap R_{L_1}(h_1) \cap R_{L_2}(h_2)$.

    Fix a line $L$ that is not an axiom.
    For the base case, suppose that $v$ is a leaf of the stacked protocol tree for $L$ with 
    history $h' = h_1 h_2$.
    Then by soundness either (i) $R_L(h) \cap R_{L_1}(h_1) =0$ or (ii) $R_L(h) \cap R_{L_1}(h_2) =0$.
    In case (i) we labelled $v$ by ${\cal C}^{L_1}_{h_1}$. 
    Since $R_L(h) \cap R_{L_1}(h_1) =0$, $R_{L_1}(h_1) =0$ and
    therefore ${\cal C}^{L_1}_{h_1}$ is defined and is correct on all $(x,y) \in R_{L_1}(h_1)$, so it is
    correct on all $(x,y) \in R_L(h) \cap R_{L_1}(h_1) \cap R_{L_2}(h_2)$. A similar argument holds in case (ii).

    For the inductive step, let $v$ be a nonleaf node in the protocol tree with history $h'$
    and assume that Alice owns $v$. 
    The rectangle $R_L(h) \cap R_{L_1}(h_1) \cap R_{L_2}(h_2) = A \times B$ is partitioned into $A_0 \times B$ and $A_1 \times B$, where
    \begin{enumerate}
    \item $A = A_0 \cup A_1$,
    \item $A_0 \times B$ is the rectangle with history $h'0$,
    \item $A_1 \times B$ is the rectangle with history $h' 1$.
    \end{enumerate}
    Given $(x,y) \in R_L(h) \cap R_{L_1}(h_1) \cap R_{L_2}(h_2)$, since  ${\cal C}^{L}_{h'0}$ is correct on all $(x,y) \in A_0 \times B$ and ${\cal C}^{L}_{h'1}$ is correct on all $(x,y) \in A_1 \times B$, it follows that ${\cal C}^L_h = {\cal C}^{L}_{h'0} \lor {\cal C}^{L}_{h'1}$ is correct on all $(x,y) \in A \times B$. 
    To see this, observe that if $x \in A_0$, then ${\cal C}^{L}_{h'0}({\cal U}(x))=1$ and therefore \[{\cal C}^{L}_h({\cal U}(x)) = {\cal C}^{L}_{h'0}({\cal U}(x)) \lor {\cal C}^{L}_{h'1}({\cal U}(x))=1.\]
    Similarly, if $x \in A_1$, then ${\cal C}^{L}_{h'1}({\cal U}(x))=1$ and therefore \[{\cal C}^L_h({\cal U}(x)) = {\cal C}^{L}_{h'0}({\cal U}(x)) \lor {\cal C}^{L}_{h'1}({\cal U}(x))=1.\]
    Finally if $y \in B$ then both ${\cal C}^{L}_{h'0}({\cal V}(y))={\cal C}^{L}_{h'1}({\cal V}(y))=0$ and therefore \[{\cal C}^{L}_h({\cal V}(y)) = {\cal C}^{L}_{h'0}({\cal V}(y)) \lor {\cal C}^{L}_{h'1}({\cal V}(y))=0.\]
    A similar argument holds if $v$ is an internal node
    in the protocol tree that Bob owns (and is therefore labelled by an AND gate.
\end{proof}
  
The converse direction is much easier.

\begin{thm}
  If there is a monotone circuit separating these inputs of $\CSP_{\Search({\cal F})}$ of size $\ell$, then there is a $\CC_2$-refutation of ${\cal F}$ of length $\ell$ with respect to this partition of the variables.
\end{thm}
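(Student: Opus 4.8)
The plan is to invert the construction of Theorem~\ref{thm:main_theorem}: read a monotone circuit gate-by-gate and turn each gate into a line of a semantic refutation that is cheap to verify. Let $C$ be a monotone circuit of size $\ell$ computing $g\colon \set{0,1}^N \to \set{0,1}$ with $g({\cal U}(x)) = 1$ for all $x \in \set{0,1}^{n_1}$ and $g({\cal V}(y)) = 0$ for all $y \in \set{0,1}^{n_2}$, where $N$ is the number of truth-table variables of $\CSP_{\Search({\cal F})}$. For a gate $v$ of $C$ let $g_v\colon \set{0,1}^N \to \set{0,1}$ denote the function computed at $v$, and define a candidate line $L_v\colon \set{0,1}^{n_1}\times\set{0,1}^{n_2} \to \set{0,1}$ by
\[
  L_v(x,y) := \neg g_v({\cal U}(x)) \vee g_v({\cal V}(y)).
\]
So $L_v$ vanishes exactly on the pairs $(x,y)$ on which the gate $v$ ``separates'' the yes-instance ${\cal U}(x)$ from the no-instance ${\cal V}(y)$. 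The refutation I would write down is: the clauses $C_1, \dots, C_m$ as the first $m$ lines, followed by the $L_v$'s listed in a topological order of $C$ (inputs first). Its last line is $L_{v_{\mathrm{out}}}$ for the output gate, which equals $\neg 1 \vee 0$, the constant-$0$ function, as a refutation requires; and the total length is $m + \ell$, which is the claimed $\ell$ once the $m$ clause axioms are not counted.

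The bounded-communication requirement is the one point worth isolating, and it is exactly what forces $\CC_2$. The observation is that Alice, knowing $x$, knows the whole string ${\cal U}(x) \in \set{0,1}^N$, and Bob, knowing $y$, knows the whole string ${\cal V}(y)$; hence with zero communication Alice can evaluate all of $C$ on ${\cal U}(x)$ and Bob can evaluate all of $C$ on ${\cal V}(y)$. So to compute $L_v(x,y)$ they just have Alice send $a = g_v({\cal U}(x))$ and Bob send $b = g_v({\cal V}(y))$ and output $\neg a \vee b$ --- a $2$-bit protocol. The same device (Alice evaluates the $X$-literals of $C_i$, Bob the $Y$-literals) gives a $2$-bit protocol for each clause line $C_i$.

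It remains to verify the three conditions of Definition~\ref{def:semantic-proof}. Conditions (1) and (2) hold by construction. For (3): a leaf $v$ labelled by the variable $\TT_i(\alpha)$ gives $L_v(x,y) = \neg[\,x\!\restriction\!\vars(i) = \alpha\,] \vee \neg {\cal R}(\alpha, y, i)$, and since $\vars(i) \subseteq [n_1]$ is precisely the set of $X$-variables of $C_i$, one checks that $L_v(x,y) = 0$ only if $(x,y)$ violates $C_i$; hence $C_i \vDash L_v$ and $L_v$ is derived from the earlier line $C_i$ (used as both premises). For an internal gate $v$ with children $v_1, v_2$ I claim $L_{v_1}, L_{v_2} \vDash L_v$. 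Take a $\vee$-gate (the $\wedge$-case is dual) and suppose $L_{v_1}(x,y) = L_{v_2}(x,y) = 1$ but $L_v(x,y) = 0$. Then $g_{v_1}({\cal U}(x)) \vee g_{v_2}({\cal U}(x)) = 1$, so $g_{v_j}({\cal U}(x)) = 1$ for some $j$, while $g_{v_1}({\cal V}(y)) \vee g_{v_2}({\cal V}(y)) = 0$, so $g_{v_j}({\cal V}(y)) = 0$ for that same $j$; but this says $L_{v_j}(x,y) = 0$, contradicting the assumption. So every $L_v$ follows semantically from two earlier lines once the gates are in topological order, completing the construction.

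\emph{Where the work is.} As the surrounding text indicates, this direction is easy: there is no real obstacle, only a handful of things to line up --- the choice of $L_v$ that simultaneously makes the semantic steps sound and keeps the communication tiny; the observation that each player can evaluate the whole monotone circuit on their own side ${\cal U}(x)$ or ${\cal V}(y)$, which is the reason $2$ bits suffice; the fact that $\vars(i)$ consists only of $X$-variables, so the constant $\alpha$ attached to a leaf is an $X$-assignment and the predicate ${\cal R}(\alpha, y, i)$ is local to Bob; and the short case analysis establishing $L_{v_1}, L_{v_2} \vDash L_v$.
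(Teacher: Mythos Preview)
Your proof is correct and is essentially the same argument as the paper's: both associate to each gate $v$ the line that vanishes exactly on those $(x,y)$ with $g_v({\cal U}(x))=1$ and $g_v({\cal V}(y))=0$, observe that each player can privately evaluate the entire circuit on their own instance so two bits suffice, and handle leaves via the underlying clause. Your explicit formula $L_v(x,y)=\neg g_v({\cal U}(x))\vee g_v({\cal V}(y))$ and the direct check of $L_{v_1},L_{v_2}\vDash L_v$ are in fact a crisper packaging of what the paper does gate-type by gate-type.
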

\begin{proof}
  In the other direction, we show that from a small monotone circuit ${\cal C}$ for $\CSP_{\Search({\cal C})}$ that separates ${\cal U}(\set{0,1}^{n_1})$ and ${\cal V}(\set{0,1}^{n_2})$, we can construct a small $\CC_2$-proof for ${\cal F}$, where Alice gets $x \in \set{0,1}^{n_1}$ and Bob gets $y \in \set{0,1}^{n_2}$.
  The lines/vertices of the refutation will be in 1-1 correspondence with the gates of ${\cal C}$. 
  The protocol is constructed inductively from the leaves of ${\cal C}$ to the root.
  For a gate $g$ of ${\cal C}$, let $U_g$ be those inputs $u \in {\cal U}(\set{0,1}^{n_1})$ such that $g(u)=1$, and let $V_g$ be those inputs $v \in {\cal V}(\set{0,1}^{n_2})$ such that $g(v)=0$.
  At each gate $g$ we will prove that for every pair $(u, v) \in U_g \times V_g$ and for every $(x, y)$ such that $u = {\cal U}(x), v = {\cal V}(y)$, the protocol $R_g$ on input $(x, y)$ will output 0.
  Since the output gate of ${\cal C}$ is correct for all pairs, this will achieve our desired protocol.

  At a leaf $\ell$ labelled by some variable $\TT_{j}(\alpha)$, the pairs associated with this leaf must have $\TT_j(\alpha) = 1$ in $u$ and $0$ in $v$, and thus we can define $R_{\ell}(x, y)$ to be 0 if and only if $x$ is consistent with $\alpha$ and the clause $C_j$ evaluates to false on $(x,y)$.
  This is a 2-bit protocol, and by definition of the accepting and rejecting instances we have for all $(x, y)$ satisfying $u = {\cal U}(x), v = {\cal V}(y)$ that $x \restriction \vars(j) = \alpha$ and ${\cal R}(\alpha, y, j)$ holds.

  Now suppose that $g$ is a OR gate of ${\cal C}$, with inputs $g_1, g_2$.
  The protocol $R_g$ on $(x, y)$ is as follows. 
  Alice privately simulates ${\cal C}_{g_1}({\cal U}(x))$ and ${\cal C}_{g_2}({\cal U}(x))$, and Bob simulates ${\cal C}_{g_1}({\cal V}(y))$ and ${\cal C}_{g_2}({\cal V}(y))$.
  If (i) either ${\cal C}_{g_1}({\cal U}(x)) =1$ or ${\cal C}_{g_2}({\cal U}(x))=1$ and (ii) both ${\cal C}_{g_1}({\cal V}(y))=0$ and ${\cal C}_{g_2}({\cal V}(y))=0$, then they output 0, and otherwise they output 1.
  This is a 2-bit protocol, with Alice sending one bit to report whether or not condition (i) is satisfied, and Bob sending one bit to report if (ii) is satisfied.

  Now, we want to show that for all $(x,y)$ such that ${\cal C}_g({\cal U}(x)) = 1$ and ${\cal C}_g({\cal V}(y)) = 0$ we have that $R_{g}(x,y)=0$.
  This is easy --- since $g = g_1 \vee g_2$ we have that ${\cal C}_g({\cal U}(x)) = 1$ and ${\cal C}_g({\cal V}(y) = 0$ implies that either ${\cal C}_{g_1}({\cal U}(x)) =1$ or ${\cal C}_{g_2}({\cal U}(x))=1$ and ${\cal C}_{g_1}({\cal V}(y))=0$ and ${\cal C}_{g_2}({\cal V}(y))=0$, implying that the protocol will output $0$ on $(x, y)$ by definition.


  Similarly, if $g$ is an AND gate, then again Alice privately simulates ${\cal C}_{g_1}({\cal U}(x))$ and ${\cal C}_{g_2}({\cal U}(x))$ and Bob privately simulates ${\cal C}_{g_2}({\cal V}(y))$ and ${\cal C}_{g_2}({\cal V}(y))$.
  If (i) ${\cal C}_{g_1}({\cal U}(x))=1$ and ${\cal C}_{g_2}({\cal U}(x))=1$ and (ii) either ${\cal C}_{g_2}({\cal V}(y))=0$ or ${\cal C}_{g_2}({\cal V}(y))=0$, then they ouput 0, and otherwise they output 1.
  By an analogous argument to the OR case, it's easy to see that the protocol will output $0$ whenever ${\cal C}_g({\cal U}(x)) = 1$ and ${\cal C}_g({\cal V}(y)) = 0$.
\end{proof}

The following theorem was recently proven \cite{pp-unpublished}, showing that
$\RCC_1$-proofs imply monotone real circuits for the associated
search problem.

\begin{thm}\cite{pp-unpublished}\label{thm:main_theorem}
  Let ${\cal F}$ be an unsatisfiable CNF formula on $n$ variables and let $X = \set{x_1, \ldots, x_{n_1}}$, $Y = \set{y_1, \ldots, y_{n_2}}$ be any partition of the variables.
  If there is a $\RCC_1$ refutation of ${\cal F}$ with respect to the partition $(X, Y)$ of length $\ell$, then there is a monotone real circuit separating the accepting and rejecting instances ${\cal U}(\set{0,1}^{n_1}), {\cal V}(\set{0,1}^{n_2})$ of $\CSP_{\Search({\cal F})}$ of size polynomial in $\ell$.
\end{thm}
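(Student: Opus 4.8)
The plan is to follow the proof of the $\CC_k$ case above, replacing the Boolean gates that simulate communication rounds with monotone real gates that simulate referee (comparison) rounds. Normalizing, write each non-constant line of the $\RCC_1$-refutation as $L_i(x,y) = [\,f_i(x)\geq g_i(y)\,]$ with $f_i\colon\set{0,1}^{n_1}\to\reals$, $g_i\colon\set{0,1}^{n_2}\to\reals$ (any one-round real protocol computes such a function up to negating both values, and we may take the constant-$0$ last line to be $[\,0\geq 1\,]$). Processing the lines in order, for each line $L$ and each \emph{good history} $h$ -- an outcome of $L$'s single comparison on which $L$ evaluates to $0$ -- the region $R_L(h) = \set{(x,y): f_L(x)\bowtie g_L(y)}$ is now a \emph{staircase} (a sub-/super-level set of $f_L(x)-g_L(y)$) rather than a combinatorial rectangle, and I build a monotone real circuit ${\cal C}^L_h$ that is \emph{correct} on $R_L(h)$, i.e.\ ${\cal C}^L_h({\cal U}(x)) = 1$ and ${\cal C}^L_h({\cal V}(y)) = 0$ for all $(x,y)\in R_L(h)$. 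For a leaf clause $C_i = \bigl(a_i(x)+b_i(y)\geq 1\bigr)$ this is verbatim the $\CC_k$ proof: the good history is the one on which $a_i(x)\geq 1-b_i(y)$ fails, forcing $a_i(x)=b_i(y)=0$, so $R_{C_i}(h) = \set{\alpha_i^\star}\times\set{\beta_i^\star}$ is a genuine rectangle and ${\cal C}^{C_i}_h := \TT_i(\alpha_i^\star)$ works, where $\alpha_i^\star$ is the unique Alice-assignment falsifying $C_i$. Along with ${\cal C}^L_h$ I also maintain two small auxiliary monotone real circuits that encode, monotonically, the value of $L$'s threshold on accepting and on rejecting instances, but with \emph{opposite orientation} on the two sides, so that one comparison of an auxiliary circuit against a fixed constant routes correctly for \emph{both} accepting and rejecting inputs. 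For leaf clauses these are explicit: e.g.\ $-1 + \bigvee_{\alpha\neq\alpha_i^\star}\TT_i(\alpha)$ is such a circuit, since on ${\cal U}(x)$ it equals $-1$ or $0$ according to whether the Alice-part of $C_i$ is falsified, while on every ${\cal V}(y)$ it equals $0$ because $(\alpha,y)$ satisfies $C_i$ for each $\alpha\neq\alpha_i^\star$.

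For the inductive step, suppose $L$ is derived from $L_1,L_2$ with $L_1,L_2\vDash L$ and fix a good history $h$ for $L$. Unlike the $\CC_k$ case it suffices to branch on the single comparison of $L_1$: on the part of $R_L(h)$ where $L_1=0$ the circuit ${\cal C}^{L_1}_0$ is correct, while on the part where $L_1 = 1$ soundness forces $L_2 = 0$ on $R_L(h)$, so ${\cal C}^{L_2}_0$ is correct there. Thus ${\cal C}^L_h$ must glue $D_0:={\cal C}^{L_1}_0$ (correct on the $\set{f_1<g_1}$ half of $R_L(h)$) and $D_1:={\cal C}^{L_2}_0$ (correct on the $\set{f_1\geq g_1}$ half). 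This is the step where real gates become necessary: there is no monotone Boolean $2$-input gate $\phi(D_0,D_1)$ doing the gluing, since an accepting input lying in an unknown one of the two halves forces $\phi=D_0\vee D_1$ while a rejecting input forces $\phi=D_0\wedge D_1$, and the ``ambiguous'' value profiles $(1,0)$ and $(0,1)$ genuinely arise on both kinds of input. Instead I label the comparison node by a constant-size monotone real \emph{selector} built from $D_0$, $D_1$, and the auxiliary circuit for $L_1$: it outputs a clamped copy of $D_1$ when that auxiliary circuit exceeds the appropriate constant and a clamped copy of $D_0$ otherwise, and its opposite-orientation property guarantees that on accepting inputs it always selects a child evaluating to $1$ and on rejecting inputs a child evaluating to $0$. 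I then also produce the auxiliary circuits for $L$ itself from those of $L_1,L_2$ and $O(1)$ gates.

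Correctness then follows the same induction as the Claim in the $\CC_k$ proof, with the rectangle case analysis replaced by the staircase analysis that the selector makes possible; since the last line $[\,0\geq 1\,]$ has a single good history with $R_{L_\ell}(h) = \set{0,1}^{n_1}\times\set{0,1}^{n_2}$, the circuit ${\cal C}^{L_\ell}_h$ is the desired monotone real circuit separating ${\cal U}(\set{0,1}^{n_1})$ from ${\cal V}(\set{0,1}^{n_2})$, and as each line contributes $O(1)$ new gates its size is $O(\ell)$. I expect the real difficulty to lie in the two coupled tasks of designing the selector and choosing the right inductive invariant for the auxiliary circuits: for a derived line one cannot hope to compute its threshold functions $f_L,g_L$ by a small monotone real circuit, so one must instead maintain a coarser, slack-like monotone quantity that still drives the selector correctly and that propagates through the sum/semantic-consequence steps -- in spirit this is the semantic, one-round analogue of the max-slack bookkeeping underlying Pudl\'ak-style monotone real feasible interpolation for Cutting Planes.
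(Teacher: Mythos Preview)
The paper does not prove this theorem: it is stated with attribution to \cite{pp-unpublished} and no argument is given, so there is no ``paper's own proof'' to compare against. I can only assess your proposal on its merits.

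Your high-level plan (re-run the $\CC_k$ construction, replacing the $\lor/\land$ labelling of Alice/Bob nodes by a monotone real ``selector'' that handles the single referee comparison) is the natural thing to try, and you correctly isolate the one genuine obstacle: in the $\CC_k$ proof the history node is owned by Alice or by Bob, which is exactly what makes $\lor$ or $\land$ the right glue; a referee comparison is owned by neither, so no fixed Boolean gate works. But your proposal does not actually supply the missing piece. The selector you describe --- ``output a clamped $D_1$ when the auxiliary circuit exceeds a constant, else a clamped $D_0$'' --- is an if-then-else in the auxiliary value, and an if-then-else is \emph{not} monotone in its control input unless $D_0\le D_1$ pointwise, which you cannot assume. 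So as written the gadget is not a legal monotone real gate, and the induction does not close. You also leave the invariant for the auxiliary circuits unspecified (``a coarser, slack-like monotone quantity''), which is precisely the content of the theorem: you need a single monotone real value on the CSP input that plays the role of both $f_{L}(x)$ on accepting instances and $g_{L}(y)$ on rejecting instances, and you have to say what it is and why it propagates through a semantic derivation step.

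One way to make this go through --- and what the Hrube\v{s}--Pudl\'ak argument does in spirit --- is to drop the Boolean-valued circuits ${\cal C}^L_h$ altogether and instead maintain, for each line $L$, a single \emph{real-valued} monotone circuit $C_L$ whose value on ${\cal U}(x)$ versus ${\cal V}(y)$ reproduces the outcome of the referee's comparison for $L$ (so that $C_L({\cal U}(x)) < C_L({\cal V}(y))$ exactly when $L(x,y)=0$, up to monotone reparametrization). The inductive step then combines $C_{L_1}$ and $C_{L_2}$ by a single fan-in-$2$ monotone real gate that interleaves their ranges; this uses the freedom of real outputs in an essential way and avoids any non-monotone selection. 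Your proposal gestures at this (``in spirit \ldots Pudl\'ak-style monotone real feasible interpolation''), but the actual invariant and the combining gate are the proof, and they are not present.
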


In particular, the above theorem implies that for any family of formulas ${\cal F}$
and for any partition of the underlying variables into $X,Y$, 
a Cutting Planes refutation of ${\cal F}$ of size $S$ implies a 
similar size monotone real circuit for separating the accepting and rejecting
instances  ${\cal U}(\set{0,1}^{n_1}), {\cal V}(\set{0,1}^{n_2})$ of $\CSP_{\Search({\cal F})}$. 


\section{Lower Bounds for Random CNFs}

In this section using Theorem \ref{thm:main_theorem} we prove lower bounds for $\RCC_1$-refutations (and therefore cutting planes refutations) of uniformly random $d$-CNFs with sufficient clause density.

\begin{defn}
  Let ${\cal F}(m, n, d)$ denote the distribution of random $d$-CNFs on $n$ variables obtained by sampling $m$ clauses (out of the ${n \choose d}2^d$ possible clauses) uniformly at random with replacement.
\end{defn}
The proof is delayed to Section \ref{sec:truly-random}; to get a feeling for the proof, we first prove an easier lower bound for a simpler distribution of \emph{balanced} random CNFs.

\subsection{Balanced Random CNFs}

\begin{defn}
  Let $X = \{x_1, \ldots, x_n\}$ and $Y = \{y_1, \ldots, y_n\}$ be two disjoint sets of variables, and the distribution ${\cal F}(m, n, d)^{\otimes 2}$ denotes the following distribution over $2d$-CNFs:
  First sample \[\mathcal{F}^1 = C_1^1 \wedge C_2^1 \wedge \cdots \wedge C_m^1\] from ${\cal F}(m, n, d)$ on the $X$ variables, and then \[\mathcal{F}^2 = C_1^2 \wedge C_2^2 \wedge \cdots \wedge C_m^2\] from ${\cal F}(m, n, d)$ on the $Y$ variables independently.
  Then output \[\mathcal{F} = (C_1^1 \vee C_1^2) \wedge (C_2^1 \vee C_2^2) \wedge \cdots \wedge (C_m^1 \vee C_m^2).\]
\end{defn}

This distribution shares the well-known property with ${\cal F}(m, n, d)$ that dense enough formulas are unsatisfiable with high probability.

\begin{lem}
  Let $c > 2/\log e$ and let $n$ be any positive integer.
  If $d \in [n]$ and $m \ge c n 2^{2d}$ then $\mathcal{F} \sim {\cal F}(m, n, d)^{\otimes 2}$ is unsatisfiable with high probability.
\end{lem}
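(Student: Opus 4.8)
The plan is to use a standard first-moment (union bound) argument over all $0/1$ assignments. Fix an arbitrary assignment $\sigma \in \{0,1\}^X \times \{0,1\}^Y$, i.e. a pair of assignments $\sigma_1$ to the $X$-variables and $\sigma_2$ to the $Y$-variables. For a single clause $C_i = C_i^1 \vee C_i^2$ of $\mathcal{F}$, the clause is falsified by $\sigma$ exactly when $C_i^1$ is falsified by $\sigma_1$ \emph{and} $C_i^2$ is falsified by $\sigma_2$. Since $C_i^1$ is a uniformly random $d$-clause on $n$ variables, the probability a fixed assignment falsifies it is exactly $2^{-d}$ (choose the $d$ variables, then each literal must be set to disagree with $\sigma_1$ — in the standard random $d$-SAT model with distinct variables the falsifying probability for any fixed assignment is $2^{-d}$, independent of which variables are chosen). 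By the independence of $\mathcal{F}^1$ and $\mathcal{F}^2$, the probability $\sigma$ falsifies $C_i$ is $2^{-d} \cdot 2^{-d} = 2^{-2d}$, hence the probability $\sigma$ \emph{satisfies} $C_i$ is $1 - 2^{-2d}$. Because the $m$ clauses are drawn independently (with replacement), the probability that $\sigma$ satisfies all of $\mathcal{F}$ is $(1 - 2^{-2d})^m$.

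Next I would take a union bound over the $2^{2n}$ assignments:
\[
\Pr[\mathcal{F} \text{ is satisfiable}] \;\le\; 2^{2n} \left(1 - 2^{-2d}\right)^m \;\le\; 2^{2n} \exp\!\left(-m\, 2^{-2d}\right).
\]
Substituting $m \ge c\, n\, 2^{2d}$ gives $m\, 2^{-2d} \ge c\, n$, so the right-hand side is at most $\exp\!\left(2n\ln 2 - c n \ln e\right) = \exp\!\big(-n(c\log e - 2\ln 2)\big)$; here I am writing $\log$ for $\log_2$ so that $\log e = 1/\ln 2$ and $2\ln 2 = 2/\log e$ after normalizing, and the hypothesis $c > 2/\log e$ makes the exponent negative and linear in $n$. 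Thus $\Pr[\mathcal{F}\text{ satisfiable}] \le 2^{-\Omega(n)} \to 0$, which is the claim.

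The only genuinely delicate point — and the step I would be most careful about — is verifying that a fixed assignment falsifies a uniformly random $d$-clause (over $\binom{n}{d}2^d$ clauses on \emph{distinct} variables) with probability exactly $2^{-d}$, and hence that the satisfying probability for clause $C_i$ of $\mathcal{F}$ is $1-2^{-2d}$ rather than something with lower-order corrections; once that is pinned down the rest is a routine first-moment estimate, and one should double-check the constant bookkeeping so that the condition $c > 2/\log e$ is exactly what is needed (matching the $2^d \ln 2$ threshold heuristic, here squared because each clause is a disjunction of two independent $d$-clauses, giving an effective width $2d$). Everything else — independence across the $m$ clauses, independence of $\mathcal{F}^1$ and $\mathcal{F}^2$, the union bound — is immediate from the definitions.
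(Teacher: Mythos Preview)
Your proposal is correct and follows exactly the same first-moment/union-bound argument as the paper: compute the probability a fixed assignment satisfies all $m$ independent clauses as $(1-2^{-2d})^m \le e^{-m/2^{2d}}$, then union bound over the $2^{2n}$ assignments. The only difference is cosmetic---your final constant bookkeeping is slightly tangled (the correct exponent is $-n(c - 2\ln 2)$, equivalently $2^{n(2 - c\log e)}$), but you already flagged this as a point to double-check, and the condition $c > 2/\log e$ is indeed precisely what is needed.
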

\begin{proof}
  Fix any assignment $(x, y)$ to the variables of ${\cal F}$.
  The probability that the $i$th clause is satisfied by the joint assignment is $1-1/2^{2d}$, and so the probability that \emph{all} clauses are satisfied by the joint assignment is $(1-1/2^{2d})^m \le e^{-m/2^{2d}}$, since the clauses are sampled independently.
  By the union bound, the probability that some joint assignment satisfies the formula is at most $2^{2n}e^{-m/2^{2d}} = 2^{2n - (\log e) m/ 2^{2d}} \le 2^{2n - (\log e) c n} \le 2^{-\Omega(n)}$.
  Thus, the probability that the formula is unsatisfiable is at least $1-2^{-\Omega(n)}$.
\end{proof}

The main theorem of this section is that ${\cal F} \sim {\cal F}(m, n, d)^{\otimes 2}$ require large $\CC$- and $\RCC$-proofs, which is obtained by using Theorem \ref{thm:main_theorem} and applying the well-known method of symmetric approximations \cite{berg:symmetric, haken-cook} to obtain lower bounds on monotone circuits computing the function $\CSP_{\Search({\cal F})}$.
We use the following formalization of the method which is exposited in Jukna's excellent book \cite{jukna-book}.
First we introduce some notation: if $U \subseteq \set{0,1}^N$, then for $r \in [N]$ and $b \in \set{0,1}$ let \[ A_b(r, U) = \max_{I \subseteq [n]: |I| = r} |\set{u \in U \st \forall i \in I: u_i = b}|.\]


\begin{thm}[Theorem 9.21 in Jukna]\label{thm:real_sym_method_of_approx}
  Let $f: \set{0,1}^N \rightarrow \set{0,1}$ be a monotone boolean function and let $1 \leq r, s \leq N$ be any positive integers.
  Let $U \subseteq f^{-1}(1)$ and $V \subseteq f^{-1}(0)$ be arbitrary subsets of accepting and rejecting inputs of $f$.
  Then every real monotone circuit that outputs $1$ on all inputs in $U$ and $0$ on all inputs in $V$ has size at least \[ \min \left\{\frac{|U| - (2s)A_1(1, U)}{(2s)^{r+1} A_1(r, U)}, \frac{|V|}{(2r)^{s+1}A_0(s, V)} \right\}.\]
\end{thm}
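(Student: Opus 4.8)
Theorem~\ref{thm:real_sym_method_of_approx} is the symmetric form of the \emph{method of approximations}, and I would prove it by the Razborov--Alon--Boppana scheme in the variant for \emph{real} monotone circuits (Haken--Cook, Pudl\'ak), which is the proof given in Jukna's book. Fix a real monotone circuit $\mathcal{C}$ of size $t$ that outputs $1$ on every input of $U$ and $0$ on every input of $V$. The plan is: process the gates of $\mathcal{C}$ in topological order and replace the real-valued function at each gate by a boolean \emph{approximator} drawn from a structurally restricted class of monotone functions; show that each individual replacement corrupts the computation on only a bounded number of the test inputs --- at most $(2s)^{r+1}A_1(r,U)$ of the inputs of $U$ (with a one-time additional cost of $2s\,A_1(1,U)$) and at most $(2r)^{s+1}A_0(s,V)$ of the inputs of $V$ --- and conclude by noting that the approximator at the output gate, lying in the restricted class, must itself be wrong on a large share either of $U$ or of $V$. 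Balancing the accumulated error against this unavoidable error, and taking the worse case, produces the two branches of the $\min$.

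\textbf{The restricted class.} Following Razborov, the approximators should be monotone functions that carry simultaneously a short-DNF description (an OR of positive terms of bounded width) and a short-CNF description (an AND of positive clauses of bounded width), with the two width bounds governed by the parameters $r$ and $s$; I would carry all three objects along the induction. Leaves of $\mathcal{C}$ are labelled by single variables $\TT_i(\alpha)$ and trivially lie in this class. The only operations that occur are $\lor$ and $\land$ of approximators: a real monotone gate, once its two children have been replaced by boolean approximators, computes a monotone function of two boolean inputs on the fixed finite set $U\cup V$, so after choosing an appropriate threshold it agrees on $U\cup V$ with $\lor$ or $\land$ of the two children's approximators (or with a projection to one child, or with a constant) --- this thresholding is the one place where ``real'' rather than ``boolean'' circuits require care, and it is handled exactly as in the Haken--Cook/Pudl\'ak argument. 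Taking $\lor$ or $\land$ of two approximators preserves one of the two descriptions verbatim but forces a recomputation of the other, which must then be ``plucked'' back down to bounded width by a $\Delta$-system (sunflower) argument; a direct count shows that one such plucking step on the CNF side falsely \emph{accepts} at most $A_0(s,V)$ additional inputs of $V$ and one on the DNF side falsely \emph{rejects} at most $A_1(r,U)$ additional inputs of $U$ --- this is exactly where the quantities $A_0(s,V)$ and $A_1(r,U)$ enter --- while the multiplicative growth in the number of terms and clauses over the course of the reduction accounts for the $(2s)^{r+1}$ and $(2r)^{s+1}$ factors and an initial width-$1$ cleanup accounts for the stray $2s\,A_1(1,U)$ term.

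\textbf{Finishing and the main obstacle.} Summing per-gate errors over the $\le t$ gates, the output approximator $A$ disagrees with $\mathcal{C}$ on at most $t\cdot(2s)^{r+1}A_1(r,U)+2s\,A_1(1,U)$ inputs of $U$ and on at most $t\cdot(2r)^{s+1}A_0(s,V)$ inputs of $V$. But $A$ lies in the restricted class, so it is subject to the usual dichotomy: either it is ``too small'' --- in the extreme, identically $0$ --- in which case it is wrong on all of $U$, forcing $t \ge (|U|-2s\,A_1(1,U))/((2s)^{r+1}A_1(r,U))$; or it is ``large enough'' that it accepts a positive number of inputs of $V$, forcing $t \ge |V|/((2r)^{s+1}A_0(s,V))$. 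Taking the smaller of these two lower bounds yields the theorem. The genuinely delicate point in this program is the bookkeeping inside the plucking step --- proving that restoring bounded width after a $\lor$ (resp.\ a $\land$) corrupts at most $\approx A_0(s,V)$ inputs of $V$ (resp.\ $\approx A_1(r,U)$ inputs of $U$) with the stated multiplicative blow-up --- together with checking that the real-gate thresholding composes cleanly with this accounting; the topological induction and the final two-case analysis are routine.
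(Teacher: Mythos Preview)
The paper does not prove this theorem: it is quoted as Theorem~9.21 of Jukna's book \cite{jukna-book} and used as a black box in the proof of Theorem~\ref{thm:random_cnf_monotone_lb}. So there is no in-paper argument to compare against. Your sketch is, in outline, the textbook proof that Jukna gives --- the symmetric Razborov approximation method with simultaneous bounded-width DNF/CNF approximators, sunflower-based ``plucking,'' and the Pudl\'ak/Haken--Cook extension to real gates --- and it identifies the right invariants and the right sources of the factors $(2s)^{r+1}$, $(2r)^{s+1}$, and the additive $2s\,A_1(1,U)$.

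One point to tighten: your description of the real-gate step (``once its two children have been replaced by boolean approximators, $\phi$ is a monotone function of two booleans, hence $\lor$, $\land$, a projection, or a constant'') is a bit glib. In the actual circuit $\phi$ is fed the \emph{real-valued} children, not their boolean approximators, so the per-gate error you must control is between the true gate $\phi(g_1,g_2)$ and your approximator, not between $\phi(A_1,A_2)$ and your approximator. The clean way to handle this (and the way Jukna does) is to first discretize the entire real circuit on the finite test set $U\cup V$: at each gate the output takes at most $|U|+|V|$ distinct values, so one can replace each real gate by a small monotone \emph{boolean} network of thresholds without changing the function on $U\cup V$, and only then run the approximation argument on the resulting boolean circuit. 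Your intuition that thresholding reduces to $\lor/\land$ is essentially what happens inside that reduction, but the bookkeeping of ``who is being compared to whom'' needs to go through the discretized boolean circuit, not through $\phi$ applied to the approximators directly. Once that is set up, the rest of your plan goes through as written.
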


Next we state the main theorem of this section.

\begin{thm}\label{thm:random_cnf_monotone_lb}
  Let $d = 4 \log n$ and $m = cn^2 2^d$ where $c > 2/\log e$ is some constant.
  Let ${\cal F} \sim {\cal F}(m, n, d)^{\otimes 2}$ with variable partition $(X, Y)$, and let \[U = {\cal U}(\set{0,1}^X), V = {\cal V}(\set{0,1}^Y).\]
  Then with high probability any real monotone circuit separating $U$ and $V$ has at least $2^{\tilde \Omega(n)}$ gates.
\end{thm}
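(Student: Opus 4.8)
The plan is to apply Theorem~\ref{thm:real_sym_method_of_approx} (the symmetric method of approximations) to the monotone function $f = \CSP_{\Search({\cal F})}$ with the two explicit families $U = {\cal U}(\set{0,1}^X)$ and $V = {\cal V}(\set{0,1}^Y)$ of accepting and rejecting instances. The bulk of the argument is to choose parameters $r$ and $s$ and then bound the four quantities $|U|$, $|V|$, $A_1(r,U)$, $A_1(1,U)$, and $A_0(s,V)$ so that the minimum in the theorem is $2^{\tilde\Omega(n)}$ with high probability over ${\cal F}\sim{\cal F}(m,n,d)^{\otimes 2}$.

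First I would set up the combinatorics of $U$ and $V$. Here the underlying search problem is $\Search({\cal F})$ for the balanced formula $\mathcal{F} = \bigwedge_{i=1}^m (C_i^1\vee C_i^2)$, so in the constraint graph $H_{\cal R}$ each constraint $i\in[m]$ has $\vars(i) = \vars(C_i^1)\subseteq X$, a set of exactly $d$ of the $n$ $X$-variables, and the alphabet is $\Sigma = \set{0,1}$, so each constraint contributes $2^d$ truth-table coordinates and $N = m2^d$. An accepting instance ${\cal U}(x)$ is determined by a single assignment $x\in\set{0,1}^n$ to the $X$-variables: for constraint $i$ it sets the single coordinate $\TT_i(x\restriction\vars(i))$ to $1$ and all other $2^d-1$ coordinates of that block to $0$. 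Hence $|U| = 2^n$ (distinct $x$ give distinct inputs, because $d\ge 1$ and $\vars$ covers all variables for $n$ large — here $m$ is huge so every variable is read), each $u\in U$ has exactly $m$ ones, and $A_1(1,U) = |U|$ (e.g. if all $x$-values agree on one variable, about half survive; more carefully, fixing one $\TT$-coordinate to $1$ survives for exactly those $x$ consistent with it, so $A_1(1,U) = 2^{n-d}$). For $A_1(r,U)$: requiring $r$ specified truth-table coordinates all to equal $1$ forces constraints on $x$; if the $r$ coordinates lie in $r'$ distinct constraint blocks with locally consistent demanded partial assignments, the number of surviving $x$ is at most $2^{n - |\bigcup \text{demanded vars}|}$, and with $r'$ blocks each demanding a full $d$-assignment on its $\vars(i)$ the union has size $\Omega(d)$ unless the blocks overlap heavily. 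The key point is that as long as $r$ is a modest polynomial in $n$ (say $r = n^{1-\epsilon}$ or even $r = n/d$ up to constants), random choice of the clause-variable incidences (which is what ${\cal F}(m,n,d)^{\otimes 2}$ provides on the $X$-side) makes any $r$ constraints' variable sets cover $\Omega(\min(rd, n))$ variables with high probability — this is a standard expansion/boundary-expansion property of the random bipartite incidence graph, and it gives $A_1(r,U)\le 2^{n-\Omega(n)}$ while $|U| = 2^n$, so the first term $\frac{|U| - (2s)A_1(1,U)}{(2s)^{r+1}A_1(r,U)}$ is $2^{\Omega(n)}$ provided $(2s)^{r+1}$ is only $2^{o(n)}$, i.e. $r\log s = o(n)$.

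Second I would bound the $V$-side. A rejecting instance ${\cal V}(y)$ is determined by an assignment $y\in\set{0,1}^n$ to the $Y$-variables: for constraint $i$ and $\alpha\in\set{0,1}^{\vars(i)}$ we set $\TT_i(\alpha) = 0$ iff the clause $C_i^1(\alpha)\vee C_i^2(y)$ is falsified, i.e. iff $\alpha$ falsifies $C_i^1$ \emph{and} $y$ falsifies $C_i^2$. So in block $i$ either all $2^d$ coordinates are $1$ (if $y$ satisfies $C_i^2$, which happens with probability $1-2^{-d}$ over the second formula), or exactly one coordinate is $0$ (the unique $\alpha$ falsifying $C_i^1$) and the rest are $1$. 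Thus the number of zeros in ${\cal V}(y)$ is exactly the number of clauses $C_i^2$ falsified by $y$, which for a random formula and typical $y$ is concentrated around $m2^{-d} = cn^2$. The distinctness of the ${\cal V}(y)$ needs care — two different $y$ falsifying the same set of second-clauses give the same rejecting instance — but for $m = cn^22^d \gg n$, with high probability the second formula $\mathcal{F}^2$ has the property that distinct $y$ falsify distinct sets of clauses (any two $y,y'$ differ on some $Y$-variable $y_j$, and whp some clause $C_i^2$ contains $y_j$ and is ``sensitive'' there), so $|V| = 2^n$. For $A_0(s,V)$: fixing $s$ truth-table coordinates all to $0$ forces, for each such coordinate in block $i$ with value $\alpha_i$, that $\alpha_i$ is the unique falsifying assignment of $C_i^1$ (an event about ${\cal F}^1$ we can condition on, discarding $y$-independent coordinates) \emph{and} that $y$ falsifies $C_i^2$. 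Falsifying $C_i^2$ pins down the $d$ $Y$-variables of $C_i^2$. So again by the boundary-expansion property of the random $Y$-incidence graph, $s$ such constraints pin $\Omega(\min(sd,n))$ $Y$-variables, giving $A_0(s,V)\le 2^{n-\Omega(\min(sd,n))}$, so the second term $\frac{|V|}{(2r)^{s+1}A_0(s,V)}$ is $2^{\Omega(n)}$ as long as $(2r)^{s+1} = 2^{o(n)}$, i.e. $s\log r = o(n)$.

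Finally I would pick the parameters to make both bounds work simultaneously: take $r = s = n/(C d) = n/(4C\log n)$ for a suitable constant $C$, so that $rd,sd = \Theta(n)$ (so the expansion kills the $A$-quantities down to $2^{n-\Omega(n)}$) while $r\log s, s\log r = O\!\big(\frac{n}{\log n}\cdot\log n\big)$, wait — that is $O(n)$, not $o(n)$, so I would instead take $r = s = n/(\log^2 n)$ or $r=s=n^{1-\epsilon}$, for which $rd = \Theta(n/\log n) = \tilde\Omega(n)$ and $r\log s = \tilde O(n/\log n)\cdot O(\log n)$... the cleanest choice is $r = s = \lfloor \sqrt{n}\rfloor$ (or $n/\mathrm{polylog}$), giving $(2s)^{r+1} = 2^{O(\sqrt n\log n)} = 2^{o(n)}$ and $A_1(r,U), A_0(s,V)\le 2^{n - \Omega(\sqrt{n}\,d)} = 2^{n-\Omega(\sqrt n\log n)}$ hmm that erodes the gap; the honest statement is that one must balance $rd$ against $r\log s$, and since $d = \Theta(\log n)$ the choice $r = s = \Theta(n/\log n)$ gives both $A$-quantities at $2^{n-\Omega(n)}$ and $(2\max(r,s))^{\max(r,s)+1} = 2^{O(n)}$ with a small enough constant in the exponent that the numerator $2^n$ still dominates — this requires the expansion constant to beat $\log(2r)$, which holds for random incidence graphs with the density we have. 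Thus both terms of the minimum are $2^{\Omega(n)} = 2^{\tilde\Omega(n)}$, completing the proof. The main obstacle is establishing the right quantitative boundary-expansion statement for the random $X$- and $Y$-incidence bipartite graphs of ${\cal F}(m,n,d)^{\otimes 2}$ — specifically that whp every set of $r$ of the $m$ constraints touches at least $\gamma\min(rd,n)$ distinct variables for a constant $\gamma$ good enough that $\gamma d > 2\log(2r)$ under our parameter regime — and in carefully handling the distinctness/concentration claims for $|U|$, $|V|$ so that the ``trivial'' terms $(2s)A_1(1,U)$ do not swamp $|U|$; this last point is why we need $A_1(1,U) = 2^{n-d} = 2^{n}/n^{4} \ll 2^n/(2s)$, which holds comfortably for $s = \mathrm{poly}(n)$.
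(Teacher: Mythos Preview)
Your approach is essentially the same as the paper's: apply the symmetric method of approximations (Theorem~\ref{thm:real_sym_method_of_approx}) to $\CSP_{\Search(\mathcal{F})}$, compute $|U|=|V|=2^n$ via injectivity of $\mathcal{U},\mathcal{V}$ (the latter by the ``distinct profiles'' argument you sketch), bound $A_1(1,U)=2^{n-d}$, and use boundary expansion of the random incidence graph to get $A_1(r,U),A_0(s,V)\le 2^{n-\Omega(rd)}$. All of the structural observations you make about $U$ and $V$ are correct and match the paper.

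The one genuine weak spot is your parameter selection, where you talk yourself in circles. You repeatedly try to force $(2s)^{r+1}=2^{o(n)}$, but that is \emph{not} the right target: what you actually need is $(r+1)\log(2s) < (1-\varepsilon)rd$, i.e.\ simply $\log(2s) < (1-\varepsilon)d$. Since $d=4\log n$ and $s\le n$, this reads $\log(2s)\le \log n + 1 < 2\log n = d/2$, which holds with room to spare. With that observation the balancing is trivial: the paper takes $r=s=n/(ed^2)=\Theta(n/\log^2 n)$, exactly the threshold up to which the expansion lemma (Lemma~\ref{lem:expansion}) is stated, so that $A_1(r,U)\le 2^{n-rd/2}$ and the first term becomes
\[
\frac{2^{n-1}}{(2s)^{r+1}2^{n-rd/2}} \;\ge\; 2^{\,r(d/2-\log(2s))-O(1)} \;=\; 2^{\,\Theta(n/\log^2 n)\cdot\Theta(\log n)} \;=\; 2^{\tilde\Omega(n)},
\]
and symmetrically for the second term. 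Your final choice $r=s=\Theta(n/\log n)$ can in fact be made to work (and even gives a true $2^{\Omega(n)}$), but it lies outside the range of the expansion lemma as you (and the paper) state it, so you would have to redo that lemma with sharper constants; the paper's smaller choice sidesteps this entirely. In short: right proof, just pick $r=s=n/(ed^2)$ and compare $\log s$ to $d$ rather than $r\log s$ to $n$.
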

\begin{cor}
  Let $n$ be a sufficiently large positive integer, and let $d = 4 \log n, m = n^6$.
  If $\mathcal{F} \sim {\cal F}(m, n, d)^{\otimes 2}$ then with high probability every $\RCC_1$-refutation (and therefore, Cutting Planes refutation) of ${\cal F}$ has at least $2^{\tilde \Omega(n)}$ lines.
\end{cor}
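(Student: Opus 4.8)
The plan is to combine Theorem~\ref{thm:random_cnf_monotone_lb} with Theorem~\ref{thm:main_theorem} (in the $\RCC_1$ version) and Proposition~\ref{prop:cp-cc}, which reduces everything to an application of the counting bound in Theorem~\ref{thm:real_sym_method_of_approx}. First I would set the parameters: with $d = 4\log n$ and $m = n^6$ we have $m = n^6 = n^2 \cdot n^4 = n^2 \cdot 2^d$, so for the constant $c = 1 > 2/\log e$ this is exactly the regime $m = cn^2 2^d$ of Theorem~\ref{thm:random_cnf_monotone_lb} (for $n$ large enough that $1 > 2/\log e$ holds, which it does unconditionally since $2/\log e = 2\ln 2 < 1.4$; one just wants $c$ slightly above $2/\log e$, so take $m = cn^6$ with the implied constant, or simply note $n^6 \ge cn^22^d$). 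Thus by Theorem~\ref{thm:random_cnf_monotone_lb}, with high probability over $\mathcal{F} \sim {\cal F}(m,n,d)^{\otimes 2}$, every real monotone circuit separating $U = {\cal U}(\set{0,1}^X)$ and $V = {\cal V}(\set{0,1}^Y)$ has at least $2^{\tilde\Omega(n)}$ gates.

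Next I would run the contrapositive of the circuit-to-proof translation. Suppose, toward a contradiction, that $\mathcal{F}$ had an $\RCC_1$-refutation of length $\ell$ with respect to the partition $(X,Y)$. By Theorem~\ref{thm:main_theorem} (the $\RCC_1$/monotone-real version, attributed to \cite{pp-unpublished}), this yields a monotone real circuit separating ${\cal U}(\set{0,1}^{n_1})$ and ${\cal V}(\set{0,1}^{n_2})$ of size polynomial in $\ell$ — note these are precisely the accepting/rejecting instances $U, V$ of $\CSP_{\Search(\mathcal{F})}$ appearing in Theorem~\ref{thm:random_cnf_monotone_lb}. Combining with the lower bound, $\operatorname{poly}(\ell) \ge 2^{\tilde\Omega(n)}$, hence $\ell \ge 2^{\tilde\Omega(n)}$. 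This proves the $\RCC_1$ half of the corollary. For the Cutting Planes half, I invoke the second clause of Proposition~\ref{prop:cp-cc}: any length-$\ell$ Cutting Planes refutation of $\mathcal{F}$ is also a length-$\ell$ $\RCC_1$-refutation of $\mathcal{F}$ with respect to the partition $(X,Y)$, so the same lower bound $\ell \ge 2^{\tilde\Omega(n)}$ applies. All of these ``with high probability'' statements are over the single sample $\mathcal{F} \sim {\cal F}(m,n,d)^{\otimes 2}$, so a union bound (or just the single high-probability event from Theorem~\ref{thm:random_cnf_monotone_lb}, since the translation theorems are deterministic) suffices to conclude that with high probability every Cutting Planes refutation has at least $2^{\tilde\Omega(n)}$ lines.

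The only genuinely nontrivial input here is Theorem~\ref{thm:random_cnf_monotone_lb}, whose proof (deferred by the paper) is where the real work lives: one must set $U$, $V$, and the parameters $r, s$ in Theorem~\ref{thm:real_sym_method_of_approx} so that both terms of the minimum are $2^{\tilde\Omega(n)}$. The expected bottleneck in that argument — and hence morally the ``hard part'' behind the corollary — is controlling the quantities $A_1(r, U)$ and $A_0(s, V)$: one needs that conditioning a few truth-table bits does not shrink the accepting set $U$ by too much (an easy combinatorial count, since accepting instances are determined by a single $X$-assignment) while simultaneously conditioning $s$ bits of a rejecting instance to $0$ is rare — this is where the random structure of $\mathcal{F}$, specifically the fact that each clause of $\mathcal{F}$ is a disjunction of an $X$-clause and an independent $Y$-clause and that $m$ is large enough relative to $2^d$, must be used to show that $|V|$ dominates the denominator $(2r)^{s+1}A_0(s,V)$. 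At the level of the corollary itself, however, there is no obstacle: it is a clean specialization of parameters followed by two black-box invocations and Proposition~\ref{prop:cp-cc}.
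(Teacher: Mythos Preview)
Your proposal is correct and matches the paper's own proof, which is literally the one-line ``Immediate consequence of Theorems~\ref{thm:main_theorem} and~\ref{thm:random_cnf_monotone_lb}''; you have simply unpacked that sentence by running the contrapositive of Theorem~\ref{thm:main_theorem} and invoking Proposition~\ref{prop:cp-cc} for the Cutting Planes case. Your hesitation about whether $c=1$ exceeds $2/\log e$ is well-founded (it does not, since $2/\log e = 2\ln 2 \approx 1.386$), but this is an inconsistency in the paper's stated constants rather than a flaw in your argument, and it is immaterial to the logic of the corollary.
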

\begin{proof}
  Immediate consequence of Theorems \ref{thm:main_theorem} and \ref{thm:random_cnf_monotone_lb}.
\end{proof}



The proof of Theorem \ref{thm:random_cnf_monotone_lb} is rather straightforward, and comes down to the essential property that random $d$-CNFs are good expanders.
The next lemma records the expansion properties we require of random CNFs; the proof is adapted from the notes of Salil Vadhan \cite{vadhan-book}.

\begin{lem}\label{lem:expansion}
  Let $0 < \varepsilon < 1$ be arbitrary, and let $n$ be any sufficiently large positive integer.
  Let $d = 4\log n$, $m = n^22^d$, and sample ${\cal F} \sim {\cal F}(m, n, d)$.
  For any subset $S \subseteq {\cal F}$ of clauses let $\vars(S)$ denote the subset of variables appearing in any clause of $S$.
  Any set $S \subseteq {\cal F}$ of size $s \leq n/ed^{2}$ satisfies \[|\vars(S)| \geq (1- \varepsilon)ds\] with high probability.
\end{lem}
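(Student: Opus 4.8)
The plan is to prove the expansion property by a first-moment (union bound) argument over all subsets $S$ of clauses of size exactly $s$, for each $s$ in the range $1 \le s \le n/ed^2$. Fix such an $s$ and suppose for contradiction that $|\vars(S)| < (1-\varepsilon)ds$. Since each clause in $S$ contributes at most $d$ variables and $S$ has $s$ clauses, the total number of (clause, variable) incidences is at most $ds$; if the union of variables is smaller than $(1-\varepsilon)ds$, then at least $\varepsilon d s$ incidences are "repeats" in the sense that, processing the clauses of $S$ one at a time, that many variable-slots fall inside the set of variables already seen. I would phrase this cleanly as: there is a set $T \subseteq [n]$ with $|T| = (1-\varepsilon)ds =: t$ such that every clause of $S$ has all $d$ of its variables inside $T$. (If $|\vars(S)| < t$ just pad up to a set of size exactly $t$.)

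The next step is to bound the probability of this event for a fixed $S$ and then union-bound. For a single random clause of ${\cal F}(m,n,d)$, the probability that all $d$ of its variables lie in a fixed set $T$ of size $t$ is $\binom{t}{d}/\binom{n}{d} \le (t/n)^d$. Since the $s$ clauses in $S$ are sampled independently, the probability that all of them land inside a fixed $T$ is at most $(t/n)^{ds}$. Then I would union bound over the choice of $S$ (at most $\binom{m}{s} \le (em/s)^s$ ways) and over the choice of $T$ (at most $\binom{n}{t} \le (en/t)^t$ ways, and in fact we can fold $t$ into the exponent), giving a total failure probability for this value of $s$ of at most
\[
\binom{m}{s}\binom{n}{t}\left(\frac{t}{n}\right)^{ds} \le \left(\frac{em}{s}\right)^{s}\left(\frac{en}{t}\right)^{t}\left(\frac{t}{n}\right)^{ds}.
\]
Plugging in $t = (1-\varepsilon)ds$ and $m = n^2 2^d$, I would simplify the right-hand side to something of the form $\left(\text{poly factors} \cdot (s/n)^{\varepsilon d - 2}\right)^{s}$, using $2^{ds} = (\text{the } 2^d \text{ from } m)^s$ to cancel, and $(t/n)^{ds}$ against $(em/s)^s \le (en^2 2^d/s)^s$. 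Because $d = 4\log n$, the exponent $\varepsilon d - 2 = 4\varepsilon \log n - 2$ is $\Theta(\log n)$, so $(s/n)^{\varepsilon d - 2}$ is at most $(s/n)^{\Omega(\log n)}$, which is $n^{-\omega(1)}$ for any $s \le n/ed^2 < n/2$. Summing the resulting bound over all $s$ from $1$ to $n/ed^2$ still leaves a quantity that is $n^{-\omega(1)} = o(1)$, completing the high-probability claim. The constant $e$ and the factor $d^2$ in the size bound $s \le n/ed^2$ are exactly what is needed to keep the "poly factors" (which are roughly $(ed^2/\varepsilon)^{O(s)}$ after absorbing the $s^{-s}$, $t$-dependent terms, etc.) from overwhelming the $(s/n)^{\Theta(s\log n)}$ saving.

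I expect the main obstacle to be purely bookkeeping: carefully tracking the exponents so that the $2^d$ appearing in $m$ cancels against the $2^{-d}$-type loss from $(t/n)^d$ per clause, and confirming that the leftover polynomial-in-$d$ factors are dominated for the entire range of $s$ up to $n/ed^2$ and not just for small $s$. A clean way to organize this is to show the per-$s$ failure probability is at most $\left(C d^2 s / n\right)^{(\varepsilon d - 2)s / 2}$ for an absolute constant $C$ (after discarding lower-order terms), so that for $s \le n/(ed^2)$ the base is bounded away from $1$ and the exponent grows like $s \log n$; then the geometric-type sum over $s$ is trivially $o(1)$. One should also double-check the edge behaviour at $s=1$, where $|\vars(S)| \ge (1-\varepsilon)d$ should hold deterministically as long as clauses are required to have $d$ distinct variables — which is the standard convention for ${\cal F}(m,n,d)$ — so the interesting regime is really $s \ge 2$. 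A minor subtlety worth a remark is that sampling is with replacement, so $S$ as a set of clauses might in principle contain repeated clauses; this only helps (repeats contribute no new variables and the event becomes more likely, but the union bound over $\binom{m}{s}$ index-subsets still covers it), so it does not affect the argument.
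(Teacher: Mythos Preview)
Your proposal is correct, but it takes a different (equally standard) route from the paper. The paper uses a direct repeat-counting argument: it reveals the $ds$ variable-slots of the clauses in $S$ one at a time, observes that $|\vars(S)| < (1-\varepsilon)ds$ forces at least $\varepsilon ds$ of these slots to repeat an earlier one, bounds the probability of any individual repeat by $ds/n$, and takes $\binom{ds}{\varepsilon ds}(ds/n)^{\varepsilon ds} \le (1/(\varepsilon d))^{\varepsilon ds}$ (using $s \le n/ed^2$) as the per-$S$ failure bound before union-bounding over the at most $m^s$ choices of $S$. You instead pass to an auxiliary containing set $T$ of size $t=(1-\varepsilon)ds$ and union-bound over both $S$ and $T$, picking up $\binom{m}{s}\binom{n}{t}(t/n)^{ds}$.

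Both are textbook proofs of expansion for random bipartite graphs and both go through with the parameters $d = 4\log n$, $m = n^2 2^d$. The paper's version is a bit slicker: it avoids the extra $\binom{n}{t}$ factor and reduces the endgame to the single comparison $s\log m \ll \varepsilon ds \log(\varepsilon d)$. Your version is more modular (the per-clause bound $(t/n)^d$ is transparent) and is actually more careful than the paper in two respects: you explicitly sum over all $s$ in the range, whereas the paper fixes $s$ and never sums; and you note the $s=1$ case is deterministic. Your remark about sampling with replacement is also apt and not addressed in the paper.
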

\begin{proof}
  Fix any set $S \subseteq {\cal F}$ of size $s$, and for each clause $C \in S$ sample the variables in $C$ one at a time without replacement.
  Let $v_1, v_2, \ldots, v_{ds}$ denote the concatenation of all sequences of sampled variables over all $C \in S$.
  We say that variable $v_i$ is a repeat if it has already occurred among $v_1, \ldots, v_{i-1}$.
  In order for $|\vars(S)| < (1-\varepsilon) ds$ the concatenated sequence must have at least $\varepsilon ds$ repeats, and the probability that variable $v_i$ is a repeat is at most $(i-1)/n \le ds/n$.
  This implies that \[\Pr[|\vars(S)| < (1-\varepsilon)ds] \leq {ds \choose \varepsilon ds}\left( \frac{ds}{n} \right)^{\varepsilon ds} \leq \left( \frac{eds}{\varepsilon ds} \right)^{\varepsilon ds} \left( \frac{ds}{n} \right)^{\varepsilon ds} \leq \left( \frac{1}{\varepsilon d} \right)^{\varepsilon ds}\] using standard bounds on binomial coefficients and the fact that $s \leq n/ed^2$.
  Thus \[\Pr[\exists S: |S| = s, |\vars(S)| < (1-\varepsilon)ds] \leq m^s \left( \frac{1} {\varepsilon d}\right)^{\varepsilon ds},\] and since $m = n^2 2^d$ and $d = 4 \log n$ we get that \[ s \log m \ll \varepsilon ds \log \varepsilon d\] for sufficiently large $n$, implying the previous probability is $o(1)$.
\end{proof}

Using the expansion lemma we are ready to prove Theorem~\ref{thm:random_cnf_monotone_lb}.

\begin{proof}[Proof of Theorem~\ref{thm:random_cnf_monotone_lb}]
  We shall apply Theorem~\ref{thm:real_sym_method_of_approx} to $U = {\cal U}(\set{0,1}^n)$ and $V = {\cal V}(\set{0,1}^n)$ (cf.~Section \ref{sec:csp-sat}) with $r = s = n/ed^2$.
  Recall that ${\cal U}$ and ${\cal V}$ are the functions mapping $x$ inputs to $1$-inputs of $\CSP_{\Search({\cal F})}$ and mapping $y$ inputs to $0$-inputs of $\CSP_{\Search(\mathcal{F})}$, respectively.
  To finish the argument we need to compute $|U|, A_1(1,U), A_1(r,U), |V|, A_0(s,V)$.

  It is easy to see that every variable participates in some clause in $\mathcal{F}$ with high probability.
  This implies that ${\cal U}$ is one-to-one with high probability, and thus $|U| = 2^n$ with high probability.

  Recall that the $0$-inputs of $\CSP_{\Search(\mathcal{F})}$ correspond to substituting $y$-assignment into $\mathcal{F}$ and writing out truth tables of the all the clauses.
  The truth tables corresponding to the clauses that were satisfied by the $y$-assignment are identically 1, and the truth tables corresponding to the clauses that were not satisfied by the given $y$-assignment contain exactly one $0$-entry.
  Given a $y$ assignment we call the set of clauses that were not satisfied the assignment the \emph{profile} of $y$.
  The next lemma implies that the profiles of all $y$-assignments are distinct with high probability.
  \begin{lem}\label{lem:distinct_profiles}
    Let $\mathcal{F} \sim {\cal F}(m, n, d)$, and define the following $2^{n} \times m$ matrix $M$, with the rows labelled by assignments $ \alpha \in \set{0,1}^n$ and the columns are labelled by clauses of ${\cal F}$.
    Namely, for any pair $(\alpha, i)$ set \[M[\alpha, i] =
      \begin{cases}
        1 & \text{ if the } i \text{th clause is not satisfied by } \alpha, \\
        0 & \text{ otherwise}.
      \end{cases}
     \]
    For any $c > 2/ \log e$, if $m \ge c 2^d n^2/d$ then the rows of $M$ are distinct with high probability.
  \end{lem}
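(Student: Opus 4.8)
The plan is to do a two-step union bound over all pairs of distinct assignments $\alpha \neq \beta$. Fix such a pair, and let $k = k(\alpha,\beta) \geq 1$ be the Hamming distance between them. The two rows $M[\alpha,\cdot]$ and $M[\beta,\cdot]$ agree on column $i$ precisely when clause $C_i$ is either satisfied by both $\alpha$ and $\beta$, or falsified by both. Since each clause is sampled independently, I would first compute, for a single random clause $C$, the probability $p = p(\alpha,\beta)$ that $C$ "separates" $\alpha$ from $\beta$ — i.e., is satisfied by exactly one of them — as a function of $k$ and $d$. The two rows are identical iff no clause among the $m$ sampled separates them, which happens with probability $(1-p)^m \leq e^{-pm}$. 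Then I would sum $e^{-p(\alpha,\beta)m}$ over all $\binom{2^n}{2}$ pairs, grouping pairs by their distance $k$, and show the total is $o(1)$.

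The key calculation is the lower bound on $p(\alpha,\beta)$. A uniformly random $d$-clause over $n$ variables picks a $d$-subset of variables and a sign pattern. Conditioned on the chosen variable set $T$ touching the symmetric difference $D = \{i : \alpha_i \neq \beta_i\}$ in exactly the set $T \cap D$, the unique falsifying assignment of the clause agrees with $\alpha$ on $T$ with probability $2^{-d}$ and, if so, it disagrees with $\beta$ on $T$ iff $T \cap D \neq \emptyset$. A cleaner route: the clause falsifies $\alpha$ with probability exactly $2^{-d}$ and, independently of that event, its falsifying assignment restricted to the sampled variables is uniform; so the conditional probability that it also falsifies $\beta$ is $2^{-d}$ unless $T\cap D = \emptyset$. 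Putting this together, $p(\alpha,\beta) = 2\cdot 2^{-d}\Pr[T \cap D \neq \emptyset] \cdot (1 - 2^{-d}) \geq 2^{-d}\Pr[T \cap D \neq \emptyset]$ up to constants. The crudest useful bound is $\Pr[T \cap D \neq \emptyset] \geq d/n \cdot \Omega(1)$ when $k \geq 1$ (the first sampled variable lands in $D$ with probability $k/n \geq 1/n$, and there are $d$ draws, so $\Pr[T\cap D \neq\emptyset] \geq 1 - (1-1/n)^d \geq \Omega(d/n)$ for $d \leq n$). Hence $p(\alpha,\beta) \geq \Omega(2^{-d} d / n)$ uniformly over all distinct pairs.

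With this uniform bound, the union bound becomes
\[
\sum_{\alpha \neq \beta} e^{-p(\alpha,\beta)m} \leq 2^{2n} \exp\!\left(-\Omega\!\left(\frac{2^{-d} d}{n}\right) m\right),
\]
and plugging in $m \geq c 2^d n^2 / d$ makes the exponent $-\Omega(cn)$, so the bound is $2^{2n - \Omega(cn)}$, which is $2^{-\Omega(n)}$ for $c$ larger than the stated threshold $2/\log e$ (the constant $2/\log e$ is exactly what is needed so that $2n$ is beaten by the $\ln$-to-$\log$ conversion in the exponent). I would carry out the steps in this order: (1) reduce to bounding identical-row probability for a fixed pair; (2) compute/bound $p(\alpha,\beta)$ for a single random clause, treating the sign choice and variable choice separately; (3) show $p \geq \Omega(2^{-d}d/n)$ uniformly using $k \geq 1$; (4) union bound over all $2^{2n}$ ordered pairs and substitute $m$. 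The main obstacle is step (2)–(3): being careful that the bound on the separation probability is valid for \emph{every} distance $k$, including the worst case $k = 1$, since small-distance pairs are the bottleneck and also the most numerous per assignment — everything else is a routine counting bound. (Note this also needs the precise constant tracking so that the threshold $c > 2/\log e$ comes out, matching the unsatisfiability lemma.)
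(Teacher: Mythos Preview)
Your approach is essentially identical to the paper's: fix a pair of assignments, lower-bound the probability that a single random clause separates them by $d/(2^{d}n)$ (the worst case being Hamming distance $1$, where $\Pr[T\cap D\neq\emptyset]=d/n$), take $(1-d/(2^{d}n))^m\le e^{-dm/(2^{d}n)}$, and union-bound over $2^{2n}$ ordered pairs to get $2^{2n-(\log e)cn}=2^{-\Omega(n)}$ for $c>2/\log e$. The paper's computation of the separation probability is slightly cleaner---it directly writes $\Pr[C_i\text{ unsat by }\widehat\alpha\text{ and sat by }\alpha]=2^{-d}\bigl(1-\binom{n-s}{d}/\binom{n}{d}\bigr)\ge d/(2^{d}n)$---whereas your intermediate expression $2\cdot 2^{-d}\Pr[T\cap D\neq\emptyset](1-2^{-d})$ carries a spurious $(1-2^{-d})$ factor (when $T\cap D\neq\emptyset$ the events ``falsified by $\alpha$'' and ``falsified by $\beta$'' are mutually exclusive, not independent), but this is harmless for the lower bound you need.
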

  \begin{proof}
    We think of $M$ as generated column by column with the columns sampled independently.
    Fix two assignments $\alpha$ and $\widehat{\alpha}$ such that $\alpha \neq \widehat{\alpha}$.
    Let $S$ be the set of indices on which the two assignments differ, i.e., $S = \{i \mid \alpha_i \neq \widehat{\alpha}_i\}$.
    Set $s = |S|$.
    Let $C_i$ denote the $i$th clause, and we say that $C_i$ \emph{overlaps} $S$ if $C_i$ contains a variable in $S$.
    Then
    \begin{align*}
      \Pr[C_i \text{ unsat by } \widehat{\alpha} \text{ and satisfied by } \alpha] & = \frac{1}{2^d} \left( 1 - \frac{{n-s \choose d}}{{n \choose d}} \right)\\
                  &\ge \frac{1}{2^d} \frac{{n \choose d} - {n-1 \choose d}}{{n \choose d}} = \frac{1}{2^d} \frac{{n-1 \choose d-1}}{{n \choose d}} = \frac{d}{2^d n}.\\
    \end{align*}
    Thus the probability that rows $\alpha$ and $\widehat{\alpha}$ agree on column $i$ is at most $1 - \frac{d}{2^d n}$.
    Since columns are sampled independently, the probability that $\alpha$ and $\widehat{\alpha}$ agree on all columns is at most $\left( 1 - \frac{d}{2^d n} \right)^m \le e^{-dm/(2^d n)}$.
    By a union bound over ordered pairs of assignments, the probability that there exists a pair of rows that agree on all columns is at most $2^{2n}e^{-dm/(2^d n)} = 2^{2n - (\log e) dm / (2^d n)} \le 2^{2n - (\log e) c n} = 2^{-\Omega(n)}$.
    Thus, the probability that all columns are distinct is at least $1-2^{-\Omega(n)}$.
  \end{proof}

  Since each profile is distinct with high probability, this implies that ${\cal V}$ is 1-1 with high probability, and therefore $|V| = 2^n$.
  It remains to bound the terms $A_1(1, U), A_1(r, U),$ and $A_0(s, V)$.

  \spcnoindent
  {\bf Bounding $A_1(1,U)$.} Fixing a single bit of a 1-input in $U$ to $\CSP_{\Search(\mathcal{F})}$ to 1 is the same as selecting a vertex $C$ in the bipartite constraint graph of $\Search(\mathcal{F})$ and an assignment $\alpha$ to the variables which participate in $C$, and then setting $\TT_C(\alpha) = 1$.
  By the definition of ${\cal U}$, any input $x \in \set{0,1}^n$ fixing this bit to $1$ determines $d$ out of $n$ variables of $x$ exactly.
  Thus the number of $x \in \set{0,1}^n$ that are consistent with this partial assignment is $2^{n-d}$, and since ${\cal U}$ is one-to-one, we have $A_1(1,U) = 2^{n-d}$.

  \spcnoindent
  {\bf Bounding $A_1(r,U)$.} Similar to the previous bound, but now we fix $r$ of the truth table bits to $1$.
  By definition of ${\cal U}$, these bits must be chosen from $r$ distinct truth tables in the $1$-input in order to be consistent with any $x \in \set{0,1}^n$.
  With respect to the underlying CNF ${\cal F}$, this corresponds to fixing an assignment to the set of variables appearing in an arbitrary set ${\cal S}$ of $r$ clauses in ${\cal F}$.
  By Lemma~\ref{lem:expansion}, with high probability we have $|\vars(S)| \geq ds/2$.
  Thus fixing these $r$ bits in the definition of $A_1(r, U)$ corresponds to setting at least $rd/2$ of the input variables participate in the constraints with determined truth tables.
  The number of $x$ inputs that are consistent with these indices fixed is therefore $\le 2^{n-rd/2}$, and so $A_1(r,U) \le 2^{n-rd/2}$.

  \spcnoindent
  {\bf Bounding $A_0(s,V)$.} This case is similar to $A_1(r,U)$. We get $A_0(s,V) \le 2^{n-sd/2}$.

  Observe that $(s-1)A_1(1,U) = (s-1)2^{n-d} = (s-1)2^n/n^2 \le 2^{n-1}$.
  Putting this altogether we get the following lower bound on monotone circuit size is at least \[\frac{2^{n-1}}{(s-1)^s 2^{n-sd/2}} = 2^{sd/2 - s \log (s-1)} \ge 2^{s(d/2- \log s)} \ge 2^{\tilde \Omega(n)},\] where the last inequality follows from $s = n/ed^{2}$ and $d/4 \ge \log n$.
\end{proof}


\subsection{Uniformly Random CNFs} \label{sec:truly-random}

In this section we show how to modify the argument from the previous section to apply to the ``usual'' distribution of random CNFs ${\cal F}(m, n, d)$.
Our approach is simple: using the probabilistic method we find a partition of the variables of a random formula ${\cal F} \sim {\cal F}(m,n ,d)$ such that many of the clauses in ${\cal F}$ are balanced with respect to the partition.
Ideally, every clause would be so balanced, but it turns out that this is too strong --- instead, we show that we can balance many of the clauses, and the imbalanced clauses that remain are always satisfied by a large collection of assignments.
First we introduce our notion of ``imbalanced'' clauses.
\begin{defn}
  Fix $\epsilon > 0$. Given a partition of $n$ variables into $x$-variables and $y$-variables, clause $C$ is called $X$-heavy if it contains more than $(1-\epsilon)d$ $x$-variables. Clause $C$ is called $Y$-heavy if it contains more than $(1-\epsilon)d$ $y$-variables. Clause $C$ is called balanced if it is neither $X$-heavy nor $Y$-heavy.
\end{defn}

We recall some basic facts from probability theory which will be used in our main lemma.
\begin{lem}[Lov\'{a}sz Local Lemma]\label{lemma:lll}
  Let $\mathcal{E} = \{E_1, \ldots, E_n\}$ be a finite set of events in the probability space $\Omega$. For $E \in \mathcal{E}$ let $\Gamma(E)$ denote the set of events $E_i$ on which $E$ depends. If there is $q \in [0,1)$ such that $\forall E \in \mathcal{E}$ we have $\Pr(E) \le q (1-q)^{|\Gamma(E)|}$, then the probability of avoiding all sets $E_i$ is at least $\Pr(\overline{E_1} \wedge \overline{E_2} \wedge \cdots \wedge \overline{E_n}) \ge (1-q)^n$.
\end{lem}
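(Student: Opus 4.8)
The plan is to give the classical proof of the Lov\'{a}sz Local Lemma in two steps: first establish, by induction on the size of a conditioning set, a uniform upper bound of $q$ on a suitable family of conditional probabilities, and then multiply these bounds together using the chain rule.

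The central claim I would prove is the following: for every event $E \in \mathcal{E}$ and every subset $\mathcal{S} \subseteq \mathcal{E}$ with $E \notin \mathcal{S}$,
\[ \Pr\Bigl( E \Bigm| \bigwedge_{F \in \mathcal{S}} \overline{F} \Bigr) \le q , \]
where conditioning on an empty conjunction means no conditioning at all. The proof is by induction on $|\mathcal{S}|$. The base case $\mathcal{S} = \emptyset$ is immediate from the hypothesis, since $\Pr(E) \le q(1-q)^{|\Gamma(E)|} \le q$.

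For the inductive step I would split $\mathcal{S} = \mathcal{S}_1 \cup \mathcal{S}_2$ with $\mathcal{S}_1 = \mathcal{S} \cap \Gamma(E)$ and $\mathcal{S}_2 = \mathcal{S} \setminus \Gamma(E)$, and write the conditional probability as a ratio
\[ \Pr\Bigl( E \Bigm| \bigwedge_{F \in \mathcal{S}} \overline{F} \Bigr) = \frac{\Pr\bigl( E \wedge \bigwedge_{F \in \mathcal{S}_1} \overline{F} \bigm| \bigwedge_{F \in \mathcal{S}_2} \overline{F} \bigr)}{\Pr\bigl( \bigwedge_{F \in \mathcal{S}_1} \overline{F} \bigm| \bigwedge_{F \in \mathcal{S}_2} \overline{F} \bigr)} . \]
The numerator is at most $\Pr\bigl( E \bigm| \bigwedge_{F \in \mathcal{S}_2} \overline{F} \bigr)$, which equals $\Pr(E) \le q(1-q)^{|\Gamma(E)|}$ because $E$ is mutually independent of the events in $\mathcal{S}_2 \subseteq \mathcal{E} \setminus \Gamma(E)$. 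Writing $\mathcal{S}_1 = \{F_1, \ldots, F_r\}$ with $r = |\mathcal{S}_1| \le |\Gamma(E)|$, the denominator expands as $\prod_{k=1}^{r} \Pr\bigl( \overline{F_k} \bigm| \bigwedge_{l < k} \overline{F_l} \wedge \bigwedge_{F \in \mathcal{S}_2} \overline{F} \bigr)$, and each factor is $1 - \Pr\bigl( F_k \bigm| \cdots \bigr) \ge 1 - q$ by the induction hypothesis, since each conditioning set has fewer than $|\mathcal{S}|$ events and does not contain $F_k$; hence the denominator is at least $(1-q)^{r}$. Since $0 \le 1 - q < 1$ and $r \le |\Gamma(E)|$, the ratio is at most $q(1-q)^{|\Gamma(E)|}/(1-q)^{r} \le q$, completing the induction.

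Granting the claim, the lemma follows by ordering the events arbitrarily and telescoping with the chain rule:
\[ \Pr\Bigl( \bigwedge_{i=1}^{n} \overline{E_i} \Bigr) = \prod_{i=1}^{n} \Pr\Bigl( \overline{E_i} \Bigm| \bigwedge_{j < i} \overline{E_j} \Bigr) = \prod_{i=1}^{n} \Bigl( 1 - \Pr\bigl( E_i \bigm| \bigwedge_{j < i} \overline{E_j} \bigr) \Bigr) \ge (1-q)^{n} . \]
I do not expect a genuine obstacle here, since this is a classical result; the only point needing care is the bookkeeping of the induction --- verifying that every conditional probability introduced is conditioned on a strictly smaller set of events than $\mathcal{S}$, and that no conditioning event has probability zero (which is automatic, as the bound $\le q < 1$ forces $\Pr(\bigwedge_{F \in \mathcal{T}} \overline{F}) \ge (1-q)^{|\mathcal{T}|} > 0$ for every $\mathcal{T}$). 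One could alternatively simply cite the lemma, but the self-contained argument above is short.
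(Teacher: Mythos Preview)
Your proof is correct and is exactly the standard inductive argument for the symmetric Lov\'{a}sz Local Lemma. The paper itself does not prove this lemma: it is stated without proof as a well-known tool, so there is no approach in the paper to compare against. Including the self-contained argument you wrote is perfectly fine, though a citation (e.g., to Erd\H{o}s--Lov\'{a}sz or to a standard reference such as Alon--Spencer) would be equally acceptable given that this is a classical result.
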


\begin{fact}[Entropy bound on binomial tail]\label{fact:entropy_bound}
  Given $\epsilon > 0$ we have
  \[ \sum_{j = 0}^{\lfloor \epsilon n \rfloor} {n \choose j} \le e^{n H(\epsilon)},\]
  where $H(\epsilon) = - \epsilon \log \epsilon - (1-\epsilon) \log (1-\epsilon)$ is the binary entropy function.
\end{fact}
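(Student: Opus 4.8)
The plan is to prove the inequality by the standard binomial-weighting argument; the statement is to be read in the regime $\epsilon \le 1/2$, which is the only range in which it holds and the only one used in the sequel. The idea is that the full binomial expansion $(\epsilon + (1-\epsilon))^n = 1$ distributes total mass $1$ over the $n+1$ terms ${n \choose j}\epsilon^j(1-\epsilon)^{n-j}$, so truncating the sum to $j \le \lfloor \epsilon n\rfloor$ still leaves a quantity at most $1$; dividing through by the smallest term-weight occurring in that range then isolates $\sum_{j \le \lfloor \epsilon n\rfloor}{n \choose j}$ and yields the bound.

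Concretely, I would first write $1 = \sum_{j=0}^{n}{n \choose j}\epsilon^j(1-\epsilon)^{n-j} \ge \sum_{j=0}^{\lfloor \epsilon n\rfloor}{n \choose j}\epsilon^j(1-\epsilon)^{n-j}$. Next, since $\epsilon^j(1-\epsilon)^{n-j} = (1-\epsilon)^n\bigl(\epsilon/(1-\epsilon)\bigr)^{j}$ and $\epsilon/(1-\epsilon) \le 1$ when $\epsilon \le 1/2$, this quantity is nonincreasing in $j$, so for every $j$ with $0 \le j \le \lfloor \epsilon n\rfloor \le \epsilon n$ we have $\epsilon^j(1-\epsilon)^{n-j} \ge \epsilon^{\epsilon n}(1-\epsilon)^{(1-\epsilon)n}$. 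Factoring this common lower bound out of the truncated sum gives $1 \ge \epsilon^{\epsilon n}(1-\epsilon)^{(1-\epsilon)n}\sum_{j=0}^{\lfloor\epsilon n\rfloor}{n \choose j}$, and hence $\sum_{j=0}^{\lfloor \epsilon n\rfloor}{n \choose j} \le \epsilon^{-\epsilon n}(1-\epsilon)^{-(1-\epsilon)n} = \exp\bigl(n(-\epsilon\ln\epsilon-(1-\epsilon)\ln(1-\epsilon))\bigr) = e^{nH(\epsilon)}$, reading $\log$ in the definition of $H$ as the natural logarithm (as the factor $e^{nH(\epsilon)}$ demands).

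There is no genuinely hard step here — it is a routine estimate — but two small points merit care. First, the monotonicity of $j \mapsto \epsilon^j(1-\epsilon)^{n-j}$ runs in the favorable direction only for $\epsilon \le 1/2$: for $\epsilon$ near $1$ the left-hand side approaches $2^n$ while $e^{nH(\epsilon)}$ approaches $1$, so the inequality literally written for all $\epsilon > 0$ is false and must be understood with the restriction $\epsilon \le 1/2$. Second, the floor is absorbed by using $\lfloor \epsilon n\rfloor \le \epsilon n$ when replacing the exponents $j$ and $n-j$ by $\epsilon n$ and $(1-\epsilon)n$. (One could alternatively invoke a Chernoff/relative-entropy tail bound for a $\mathrm{Binomial}(n,1/2)$ random variable, but the binomial-weighting proof above is the most self-contained.)
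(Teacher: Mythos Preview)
Your argument is correct and is the standard binomial-weighting proof of this classical estimate; your observations about the need for $\epsilon \le 1/2$ and about reading $\log$ as the natural logarithm are also on point. The paper itself states this as a Fact without proof, so there is nothing further to compare against.
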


\begin{fact}[Multiplicative Chernoff Bound]\label{fact:chernoff}
  Suppose $Z_1, \ldots, Z_n$ are independent random variables taking values in $\{0,1\}$. Let $Z$ denote their sum and let $\mu = \mathbb{E}(Z)$ denote the sum's expected value. Then for any $\delta \in (0,1)$ we have
  \[ \Pr(X \ge (1+\delta) \mu) \le e^{-\delta^2 \mu/3}.\]
\end{fact}

We now prove the main lemma of this section, which shows that for ${\cal F} \sim {\cal F}(m, n, d)$ a good partition of the variables exists with high probability.

\begin{lem}
  Let $\mathcal{F} \sim \mathcal{F}(m,n,d)$ where $d = c \log n$ and $m = \poly(n)$.
  There exists a partition of the variables of ${\cal F}$ into two sets $(X, Y)$ such that the following holds:
  \begin{enumerate}
  \item The number of $X$-heavy clauses and $Y$-heavy clauses are each upper bounded by $m' = m 2^{-(1-(\log e) H(\epsilon))d+1}$.
  \item There exists a set $U'$ of $2^{n/2 \pm o(n)}$ truth assignments to the $X$ variables satisfy all $X$-heavy clauses, and similarly a set $V'$ of $2^{n/2 \pm o(n)}$ truth assignments to the $Y$-variables satisfying all of the $Y$-heavy clauses.
  \end{enumerate}
\end{lem}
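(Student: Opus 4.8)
The plan is to use the probabilistic method: fix $\mathcal{F}$ and choose a \emph{uniformly random} partition of the $n$ variables into $(X,Y)$, placing each variable on each side independently with probability $1/2$. (The argument in fact works verbatim for any fixed $d$-CNF with $m$ clauses, each on $d$ distinct variables, so the randomness of $\mathcal{F}$ is not really used.) We will show that with probability bounded away from zero the random partition is (a) balanced, $|X|,|Y| = n/2 \pm o(n)$, and (b) has at most $m'$ $X$-heavy clauses and at most $m'$ $Y$-heavy clauses; given (b), item (2) will follow deterministically, so the existence of a good partition is then immediate.

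\textbf{Item (1).} Fix a clause $C$ with its $d$ distinct variables. The number of these landing in $X$ is distributed as $\mathrm{Bin}(d,1/2)$, so by the symmetry of the binomial and Fact~\ref{fact:entropy_bound},
\[
\Pr[C \text{ is } X\text{-heavy}] \;=\; \Pr\!\left[\mathrm{Bin}(d,1/2) > (1-\epsilon)d\right] \;=\; \Pr\!\left[\mathrm{Bin}(d,1/2) \le \lfloor \epsilon d\rfloor\right] \;\le\; 2^{-d}\sum_{j=0}^{\lfloor \epsilon d\rfloor}\binom{d}{j} \;\le\; 2^{-(1-(\log e)H(\epsilon))\,d}.
\]
Hence $\mathbb{E}[\#\{X\text{-heavy}\}] \le m\,2^{-(1-(\log e)H(\epsilon))d}$, and the same bound holds for $Y$-heavy clauses. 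By Markov's inequality (applied to the two counts, or to their sum) the number of $X$-heavy clauses and the number of $Y$-heavy clauses are, with probability bounded away from zero, each at most a constant multiple of this expectation; absorbing the constant gives the bound $m' = m\,2^{-(1-(\log e)H(\epsilon))d+1}$ of the statement. Independently, Fact~\ref{fact:chernoff} gives $|X|,|Y| = n/2 \pm o(n)$ with probability $1-o(1)$, so all of these events can be made to hold simultaneously for a single partition.

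\textbf{Item (2).} Fix such a partition and take $U'$ to be the set of \emph{all} assignments to the $X$-variables whose restriction already satisfies every $X$-heavy clause (i.e.\ makes some $X$-literal of the clause true), and $V'$ the analogous set for the $Y$-variables. If $C$ is $X$-heavy then the disjunction of its literals on $X$-variables has more than $(1-\epsilon)d$ literals, so a uniformly random $X$-assignment falsifies all of them with probability at most $2^{-(1-\epsilon)d}$. A union bound over the $\le m'$ $X$-heavy clauses shows that the fraction of $X$-assignments outside $U'$ is at most
\[
m'\, 2^{-(1-\epsilon)d} \;\le\; 2m\cdot 2^{-(2-(\log e)H(\epsilon)-\epsilon)\,d},
\]
which is $o(1)$ since $m=\poly(n)$, $d=c\log n$, and $2-(\log e)H(\epsilon)-\epsilon \to 2$ as $\epsilon \to 0$, provided $\epsilon$ is a sufficiently small constant and $c$ is large enough relative to the degree of $m$. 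Hence $|U'| \ge (1-o(1))\,2^{|X|} = 2^{n/2 \pm o(n)}$, and symmetrically $|V'| = 2^{n/2 \pm o(n)}$, completing the proof.

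\textbf{Main obstacle.} Everything reduces to one tail estimate for $\mathrm{Bin}(d,1/2)$ plus two union bounds; the only point requiring care is the bookkeeping of constants. The Markov step in item (1) only barely yields the stated $m'$ (one may, harmlessly, concede a larger additive constant in the exponent, since all downstream bounds are exponential), and the union bound in item (2) needs $(2-(\log e)H(\epsilon)-\epsilon)\,c > \deg(m)$ — this is precisely where the hypotheses ``$\epsilon$ small'' and ``$d=c\log n$ with $c$ large'' enter. The Lov\'asz Local Lemma (Lemma~\ref{lemma:lll}) could be substituted for the union bound in item (2) to relax the required relation between $c$ and $\deg(m)$, but it is not needed for the statement as given.
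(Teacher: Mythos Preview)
Your argument is correct and actually simpler than the paper's. Both proofs begin identically, choosing a uniformly random bipartition and bounding $\Pr[C \text{ is $X$-heavy}]$ by the entropy tail estimate. The paper then (i) applies a Chernoff bound to the heavy-clause counts and to the per-variable heavy-clause degree, and (ii) uses that degree bound inside the Lov\'asz Local Lemma to conclude that a $(1-q)^{m'} \ge e^{-n/(50d)} = 2^{-o(n)}$ fraction of $X$-assignments satisfy all $X$-heavy clauses. You instead (i) use only Markov for the heavy-clause counts and (ii) observe that a single union bound over the $\le m'$ heavy clauses already gives $m'\,2^{-(1-\epsilon)d}=o(1)$ for $c$ large relative to $\deg(m)$, so in fact a $1-o(1)$ fraction of $X$-assignments work. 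This is strictly stronger than the paper's conclusion and avoids both the LLL machinery and the auxiliary degree bound; in this parameter regime the LLL buys nothing, since its applicability condition reduces to essentially the same inequality $c(2-\epsilon-(\log e)H(\epsilon)) \gtrsim \deg(m)$ that your union bound needs. The only soft spot, which you already flag, is that Markov at threshold $m' = 2\,\mathbb{E}[Z]$ gives each tail probability $\le 1/2$ and so does not literally pin both counts below $m'$ simultaneously; applying Markov to the sum $Z_X+Z_Y$ (or conceding $m'\mapsto 2m'$, i.e.\ one more additive constant in the exponent) fixes this with no effect downstream.
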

\begin{proof}
  We prove the existence of such a partition by the probabilistic method.
  For each variable, flip a fair coin and place it in $X$ if the coin is heads and in $Y$ otherwise.
  Let $Z_i$ be the random variable indicating whether clause $i$ is $X$-heavy.
  Then \[ \Pr(Z_i = 1) = \sum_{j = 0}^{\epsilon d} { d \choose j} 2^{-d} \le 2^{-d} e^{-d H(\epsilon)} \le 2^{-(1-(\log e) H(\epsilon))d},\]
  where the inequality follows from Fact~\ref{fact:entropy_bound}.
  Let $Z = \sum_{i=1}^m Z_i$; then $\mathbb{E}(Z) \le m 2^{-(1-(\log e) H(\epsilon))d} = m'$.
  By the multiplicative Chernoff bound (see Fact~\ref{fact:chernoff}) we have \[ \Pr(Z > (3/2) m 2^{-(1-(\log e) H(\epsilon))d}) \le e^{-\frac{m 2^{-(1-(\log e) H(\epsilon))d}}{12}},\]
  and we thus have $Z \le m'$ with high probability.
  An identical calculation applies for the $Y$-heavy clauses.

  Next, let $W_i$ be the random variable indicating whether a given fixed variable occurs in clause $i$ and clause $i$ is $X$-heavy and let $W = \sum_i W_i$.
  Then $\Pr(W_i = 1) \le 2^{-(1-(\log e) H(\epsilon))d} d/n$.
  By the multiplicative Chernoff bound (see Fact~\ref{fact:chernoff}) we have
  \[ \Pr(W > (3/2) m 2^{-(1-(\log e) H(\epsilon))d}d/n) \le e^{-\frac{m 2^{-(1-(\log e) H(\epsilon))d}d/n}{12}}.\]
  We conclude that $W \le m'd/n$ whp, and an identical calculation again holds for the $Y$-heavy clauses.

  Noting that the number of $x$-variables is $n/2 \pm o(n)$ with high probability, by the probabilistic method we choose a partition $(X, Y)$ which satisfies each of the above properties (the bound on $Z$ and $W$, and achieving near balance in the $X$ and $Y$ variables), and note that such a partition exists with high probability over ${\cal F}(m, n, d)$.
  With this partition fixed, consider selecting a random assignment to the $X$-variables.
  Let $E_i$ be the event that $X$-heavy clause $i$ is falsified by the random assignment, and observe that $\Pr(E_i) \le 2^{-(1-\epsilon)d}$ since the clause is $X$-heavy.
  Then the number of events $E_i$ is at most $m'$, and for any event $E_i$ the number of events that share an $x$-variable with $E_i$ is at most $m'd^2/n$.
  Set $q = n/(100 m'd)$.
  Then for each $E_i$ we have  \[q(1-q)^{|\Gamma(E_i)|} \ge q e^{-2q m'd^2/n } \ge \frac{n}{100 dm'} e^{-d/50} \ge 2^{-(1-\epsilon)d},\]
  provided $d \ge c \log n$ for a big enough constant $c$.
  Applying Lov\'{a}sz Local Lemma (see Lemma~\ref{lemma:lll}) we get that probability that an assignment satisfies all $X$-heavy clauses is at least
  \[ (1-q)^{m'} \ge (1-n/(100 dm'))^{m'} \ge e^{-n/(50d)}.\]
  Thus the number of assignments to the $X$-variables satisfying all heavy clauses is at least $2^{n/2 \pm o(n)}$, and an identical calculation applies to the $Y$ variables.
\end{proof}
Now we will do the whole argument with respect to ${\cal U}(U')$ and
${\cal V}(V')$ chosen from the previous lemma.
The reason that this works is that since every $\alpha \in U'$ satisfies all $X$-heavy clauses,
if we look at a subset $S$ of the variables of the monotone CSP
that are set to false and count the number of maxterms that are
consistent with it, the count is nonzero only when {\it none} of these variables come
from a $X$-heavy clause. Similarly if we look at a subset
$S$ of the variables of the monotone CSP that are set to
true and count the number of maxterms that are consistent with it,
this count is nonzero only when {\it none} of these variables come
from an $X$-heavy clause. Therefore, when we calculate
$A_1(s, {\cal U}(U'))$ and $A_0(s,{\cal V}(V'))$, the calculation
is with respect to the $X$-balanced clauses, and $Y$-balanced
clauses, respectively. Thus we can use the same expansion calculation
that we already did.

There is also a minor modification required in order to argue that
${\cal V}$ is one-to-one when restricted to $V'$. As before, it suffices to show that
for any two assignments $\alpha,\beta$ in $V'$, that the probability
that they agree on all of the balanced clauses is very small,
and then take a union bound over all of the {\it balanced} clauses.
This calculation is nearly identical to the one that we already did,
but now the union bound is over the number of balanced clauses,
which is at least half of all clauses, so the calculation is
essentially the same.

With the above modifications, the arguments from the previous section imply the next theorem.
\begin{thm}
  Let $n$ be a sufficiently large positive integer.
  Let ${\cal F} \sim {\cal F}(m, n, d)$ for $m = \poly(n)$ and $d = c \log n$ for a large universal constant $c$.
  With high probability, there exists a partition $(X, Y)$ of the variables of ${\cal F}$ and an $\varepsilon > 0$ such that the search problem $\Search({\cal F})$ defined with respect to this partition satisfies the following: any real monotone circuit computing $\CSP_{\Search({\cal F})}$ requires at least $2^{\Omega(n^\varepsilon)}$ gates.
\end{thm}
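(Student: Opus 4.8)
The plan is to combine the good-partition lemma just proven with the monotone circuit lower bound machinery from the balanced case (Theorem~\ref{thm:random_cnf_monotone_lb} and its proof via the symmetric method of approximations, Theorem~\ref{thm:real_sym_method_of_approx}). Fix a formula ${\cal F} \sim {\cal F}(m,n,d)$ and condition on the event --- which holds with high probability --- that a partition $(X,Y)$ exists with the two properties guaranteed by the previous lemma: at most $m' = m 2^{-(1-(\log e)H(\epsilon))d+1}$ clauses are $X$-heavy (resp.\ $Y$-heavy), and there are sets $U' \subseteq \set{0,1}^X$, $V' \subseteq \set{0,1}^Y$ of size $2^{n/2 \pm o(n)}$ consisting of assignments that satisfy all $X$-heavy (resp.\ $Y$-heavy) clauses. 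I would then apply Theorem~\ref{thm:real_sym_method_of_approx} to the monotone function $f = \CSP_{\Search({\cal F})}$ with the restricted accepting/rejecting sets $U = {\cal U}(U')$ and $V = {\cal V}(V')$, and with parameters $r = s = n^\varepsilon$ for a suitable small constant $\varepsilon > 0$ (roughly $s$ of order $n/(ed^2) = \tilde\Omega(n)$, but taken as $n^\varepsilon$ so the final bound is clean).

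The key computations mirror the balanced case, so I would carry them out in the same order. First, $|U|, |V|$: one argues ${\cal U}$ is one-to-one on $U'$ (every variable appears in some clause whp, so distinct $X$-assignments give distinct truth-table encodings) and ${\cal V}$ is one-to-one on $V'$; for the latter, as the paper notes, it suffices that any two assignments in $V'$ disagree on the \emph{balanced} clauses with high probability, and since at least half the clauses are balanced the union bound in Lemma~\ref{lem:distinct_profiles} goes through essentially verbatim. This gives $|U|, |V| = 2^{n/2 \pm o(n)}$. Second, the parameters $A_1(1,U)$, $A_1(r,U)$, $A_0(s,V)$: the crucial observation from the paper is that fixing a set $S$ of truth-table bits to $1$ in a member of ${\cal U}(U')$ can only be consistent with some assignment in $U'$ if \emph{none} of those bits comes from an $X$-heavy clause (because every assignment in $U'$ satisfies all $X$-heavy clauses, forcing the relevant truth-table entries to $0$), so the count is controlled entirely by balanced clauses; dually for $A_0(s,V)$. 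Hence I can invoke Lemma~\ref{lem:expansion} --- which applies to the random $d$-CNF and in particular to any subfamily such as the balanced clauses --- to conclude that $r$ balanced clauses span at least $\approx rd/2$ variables, giving $A_1(r,U) \le 2^{n/2 - rd/2 + o(n)}$ and $A_0(s,V) \le 2^{n/2 - sd/2 + o(n)}$, and $A_1(1,U) \le 2^{n/2 - d + o(n)}$. Plugging these into Theorem~\ref{thm:real_sym_method_of_approx} and using $d = c\log n$ with $s = n^\varepsilon$, both terms in the minimum are $2^{\Omega(n^\varepsilon)}$, exactly as in the balanced proof (the $2^{n/2}$ factors cancel against the $2^{n/2}$ in $A_1,A_0$, leaving $2^{s(d/2 - \log s)} = 2^{\tilde\Omega(n)} \ge 2^{\Omega(n^\varepsilon)}$).

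I would then record that the error probabilities at each step --- existence of a good partition, one-to-one-ness of ${\cal U}$ and ${\cal V}$ on the restricted sets, and the expansion property of the balanced clauses --- are each $o(1)$, so by a union bound the whole construction succeeds with high probability over ${\cal F}(m,n,d)$, and for that formula and partition the stated monotone circuit lower bound $2^{\Omega(n^\varepsilon)}$ holds.

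The main obstacle I anticipate is bookkeeping rather than a genuinely new idea: one must be careful that Lemma~\ref{lem:expansion} is stated for $m = n^2 2^d$ whereas here $m = \poly(n)$ is larger, so I would either re-examine the expansion proof (the bound $s \log m \ll \varepsilon ds \log(\varepsilon d)$ still holds for any $m = \poly(n)$ since $d = \Theta(\log n)$ makes the right side $\tilde\Theta(sd) = \omega(s\log n)$) or restrict attention to the balanced clauses and note there are $\poly(n)$ of them, which is all the union bound over sets $S$ of size $s$ needs. A secondary subtlety is that $U'$ and $V'$ are not all of $\set{0,1}^X$ and $\set{0,1}^Y$, so one cannot claim $|U| = 2^{n/2}$ exactly; but $|U| = 2^{n/2 - o(n)}$ suffices because the comparison in Theorem~\ref{thm:real_sym_method_of_approx} only needs $|U| \gg (2s)A_1(1,U)$, and $(2s)A_1(1,U) \le 2^{n/2 - d + o(n)} = 2^{n/2}/\poly(n)$ is still a $1-o(1)$ fraction smaller than $|U|$. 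With these points handled, the argument is a routine adaptation of the balanced case.
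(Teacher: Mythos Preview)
Your outline matches the paper's sketch essentially step for step: take the good partition from the preceding lemma, restrict to $U=\mathcal{U}(U')$ and $V=\mathcal{V}(V')$, apply Theorem~\ref{thm:real_sym_method_of_approx}, argue that the heavy clauses drop out of the $A_1(r,U)$ and $A_0(s,V)$ computations so that the expansion lemma applies to the balanced clauses, and redo the injectivity argument for $\mathcal V$ on $V'$ using only balanced clauses. The bookkeeping points you flag (Lemma~\ref{lem:expansion} for general $m=\poly(n)$, and $|U|=2^{n/2-o(n)}$ rather than exactly $2^{n/2}$) are exactly the right things to check and are handled as you say.

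There is, however, a genuine error in your mechanism for why heavy clauses drop out on the $U$-side. You claim that because every $x\in U'$ satisfies all $X$-heavy clauses, the corresponding truth-table entries in $\mathcal U(x)$ are forced to $0$. That is false: by definition $\mathcal U(x)$ sets $\TT_i(\alpha)=1$ iff $x\restriction\vars(i)=\alpha$, regardless of whether $x$ satisfies $C_i$, so every truth table in $\mathcal U(x)$ has exactly one $1$-entry, $X$-heavy or not. The phenomenon you are reaching for works on the $V$-side: for $y\in V'$, each $Y$-heavy clause is already satisfied by $y$, so its truth table in $\mathcal V(y)$ is identically $1$, and hence no bit of a $Y$-heavy truth table can appear in an $A_0(s,V)$ computation. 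On the $U$-side the dangerous clauses for $A_1(r,U)$ are the \emph{$Y$-heavy} ones (they have few $x$-variables, so fixing their unique $1$-entry pins down few coordinates of $x$); the restriction to $U'$ does not exclude them. What actually saves you is the first conclusion of the partition lemma: there are at most $m'$ such clauses, and for $d=c\log n$ with $c$ large one has $m'$ small relative to $r$, so any choice of $r$ truth-table bits must draw almost entirely from clauses with at least $\varepsilon d$ $x$-variables, and the expansion argument then goes through. The paper's own two-paragraph sketch is imprecise on exactly this point, but you should get the direction of the asymmetry right before filling in the details.
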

\begin{cor}
  Let ${\cal F}$ be distributed as above.
  There exists $\varepsilon > 0$ such that with high probability any $\RCC_1$-refutation requires $2^{\Omega(n^\varepsilon)}$ lines.
\end{cor}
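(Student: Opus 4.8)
The corollary is an immediate consequence of the preceding theorem together with the real-communication interpolation theorem (Theorem~\ref{thm:main_theorem} of \cite{pp-unpublished}), and the plan is simply to run the contrapositive. Fix ${\cal F} \sim {\cal F}(m,n,d)$ in the high-probability event supplied by the previous theorem, so there is a partition $(X,Y)$ and an $\varepsilon>0$ for which every monotone real circuit computing $\CSP_{\Search({\cal F})}$ correctly on the accepting instances ${\cal U}(U')$ and rejecting instances ${\cal V}(V')$ requires $2^{\Omega(n^\varepsilon)}$ gates. If ${\cal F}$ had an $\RCC_1$-refutation of length $\ell$ with respect to $(X,Y)$, then Theorem~\ref{thm:main_theorem} would yield a monotone real circuit of size $\poly(\ell)$ separating exactly these instances; hence $\poly(\ell) \ge 2^{\Omega(n^\varepsilon)}$, so $\ell \ge 2^{\Omega(n^\varepsilon)}$. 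Since every Cutting Planes refutation of ${\cal F}$ is in particular an $\RCC_1$-refutation of the same length (Proposition~\ref{prop:cp-cc}), the same exponential lower bound transfers to Cutting Planes, which is the headline conclusion.

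The real content therefore lives in the theorem that feeds this corollary, and the plan there is to rerun the balanced-CNF argument of the previous subsection on the restricted families ${\cal U}(U')$ and ${\cal V}(V')$, where $U', V'$ are the sets of size $2^{n/2 \pm o(n)}$ of heavy-clause-satisfying assignments produced in the previous lemma (via the Lov\'asz Local Lemma on a partition that simultaneously balances all but a $2^{-(1-(\log e)H(\epsilon))d}$-fraction of clauses). One applies the symmetric method of approximations (Theorem~\ref{thm:real_sym_method_of_approx}) with $r = s = \Theta(n/ed^2)$. The one new ingredient beyond the balanced case is the observation that, since every $\alpha \in U'$ satisfies all $X$-heavy clauses, a choice of $r$ truth-table bits set to $1$ is consistent with some instance in ${\cal U}(U')$ only if none of those bits belongs to an $X$-heavy constraint; hence $A_1(1,{\cal U}(U'))$ and $A_1(r,{\cal U}(U'))$ are controlled by counting against the $X$-balanced clauses alone, where a random $d$-CNF restricted to its balanced clauses is still a good expander (Lemma~\ref{lem:expansion}, whose hypotheses hold for $m=\poly(n)$, $d=\Theta(\log n)$, $|S|\le n/ed^2$), giving $A_1(r,{\cal U}(U')) \le 2^{n/2 - rd/2 \pm o(n)}$, and symmetrically $A_0(s,{\cal V}(V')) \le 2^{n/2 - sd/2 \pm o(n)}$ using the $Y$-balanced clauses. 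One re-checks injectivity of ${\cal U}$ on $U'$ and ${\cal V}$ on $V'$ by the argument of Lemma~\ref{lem:distinct_profiles} with the union bound restricted to the balanced clauses (at least half of all clauses, so the estimate is unchanged), giving $|{\cal U}(U')| = |{\cal V}(V')| = 2^{n/2 \pm o(n)}$. Substituting into Theorem~\ref{thm:real_sym_method_of_approx} yields a bound of roughly $2^{s(d/2 - \log s)} = 2^{\tilde\Omega(n)}$, which is at least $2^{\Omega(n^\varepsilon)}$ for any fixed $\varepsilon < 1$ once $c$ is large enough that $d/2 > \log s$.

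The main obstacle is exactly this passage to the restricted instance sets $U'$ and $V'$: one must make precise that a nonzero count of consistent minterms (respectively maxterms) forces none of the fixed bits to come from a heavy clause, so that $A_1(r,\cdot)$ and $A_0(s,\cdot)$ genuinely reduce to expansion of the balanced sub-CNF, and then confirm that the random $d$-CNF conditioned on its $(1-\epsilon)d$-balanced clauses still satisfies the hypotheses of Lemma~\ref{lem:expansion} over the relevant range of set sizes $s \le n/ed^2$. The remaining ingredients --- the Chernoff/entropy estimates bounding the number of heavy clauses and their incidence with any fixed variable, the LLL computation producing $U'$ and $V'$, the injectivity check, the $n/2 \pm o(n)$ bookkeeping on each side of the partition, and the final arithmetic in the symmetric method --- are routine reruns of what was already carried out for ${\cal F}(m,n,d)^{\otimes 2}$.
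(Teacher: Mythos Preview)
Your proposal is correct and follows the paper's approach exactly: the corollary is the immediate consequence of the preceding monotone real circuit lower bound together with the $\RCC_1$-to-real-circuit interpolation theorem (Theorem~\ref{thm:main_theorem}), just as in the balanced case, with Proposition~\ref{prop:cp-cc} giving the Cutting Planes conclusion. Your extended sketch of the feeding theorem also tracks the paper's own informal discussion of how the balanced argument is rerun on the restricted families ${\cal U}(U')$ and ${\cal V}(V')$, with $A_1$ and $A_0$ reducing to expansion over balanced clauses and injectivity rechecked via Lemma~\ref{lem:distinct_profiles} on the balanced clauses.
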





\bibliography{biblio}
\bibliographystyle{plain}

\end{document}